\title{\LARGE \bf
Bridging Abstraction-Based Hierarchical Control and Moment Matching: A Conceptual Unification}
\author{Zirui Niu, Mohammad Fahim Shakib, and Giordano Scarciotti
\thanks{Zirui Niu, Mohammad Fahim Shakib, and Giordano Scarciotti are with the Department of Electrical and Electronic Engineering, Imperial College London, SW7 2AZ, London, U.K. {\tt\footnotesize [zirui.niu20,m.shakib,g.scarciotti]@imperial.ac.uk}. }%
}
\newtheorem{thm}{\it Theorem}
\newtheorem{lem}{\it Lemma}
\newtheorem{assume}{\it Assumption}
\newtheorem{remark}{\it Remark}
\newtheorem{corol}{\it Corollary}
\newtheorem{define}{\it Definition}
\begin{document}

\maketitle
\thispagestyle{empty}
\pagestyle{empty}

\begin{abstract}
In this paper, we establish a relation between approximate-simulation-based hierarchical control (ASHC) and moment matching techniques, and build a conceptual bridge between these two frameworks. To this end, we study the two key requirements of the ASHC technique, namely the bounded output discrepancy and the $M$-relation, through the lens of moment matching. We show that, in the linear time-invariant case, both requirements can be interpreted in the moment matching perspective through certain system interconnection structures. Building this conceptual bridge provides a foundation for cross-pollination of ideas between these two frameworks.
\end{abstract}

\section{Introduction}\label{sec:intro}
In the field of computational science and engineering, real-world complex dynamical systems typically require large-scale mathematical models for accurate representations. However, the high dimensionality of those models can result in excessive computational costs for analysis and (control) design, leading to substantial processing time and energy consumption for, \textit{e.g.}, simulations and hardware implementations.

Two prominent methodologies that address these challenges are model order reduction (MOR) and approximate-simulation-based hierarchical control (ASHC). MOR focuses on simplifying high-dimensional models while retaining essential system dynamics, see, \textit{e.g.},~\cite{ref:antoulas2005approximation,ref:schilders2008model,ref:scarciotti2024interconnection} and references therein for details of different MOR approaches. Instead, ASHC focuses on efficient control design for large-scale systems by decomposing control tasks into hierarchical levels. These levels comprise a simpler model, called abstract system, and an interface function that translates the input from the abstract system back to the original, large-scale system, as depicted in Fig.~\ref{fig:Hierarchical_Structure}. In such manner, the control synthesis of the large-scale system can be achieved with lower complexity~\cite{ref:girard2009hierarchical}. This ASHC approach has been applied, for instance, to control networked systems~\cite{ref:zamani2017compositional} and robot systems~\cite{ref:girard2009hierarchical}.

\begin{figure}[tbp]
\begin{centering}
    \includegraphics[height=4.5cm]{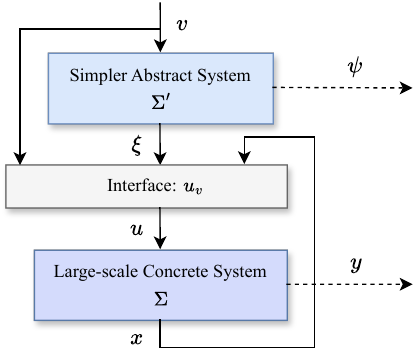}
    \caption{~~Hierarchical control system architecture. $\qquad\qquad\qquad\qquad\qquad$~~~~}
    \label{fig:Hierarchical_Structure}
\end{centering}
\end{figure}
\begin{figure}[tbp]
\begin{centering}
    \includegraphics[width=\linewidth]{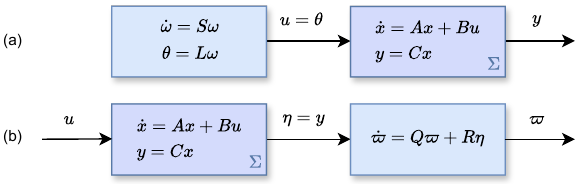}
    \caption{Illustration of the direct interconnection (a) and the swapped interconnection (b) in moment matching.}
    \label{fig:Interconnection_MM}
\end{centering}
\end{figure}

In this paper, we establish a relation between the ASHC technique introduced in~\cite{ref:girard2009hierarchical} and the moment-matching-based MOR technique introduced in~\cite{ref:astolfi2010model}. We observe that both methods hinge upon the use of specific interconnection structures. On the one hand, moment matching uses system interconnections, shown in Fig.~\ref{fig:Interconnection_MM}, to characterise some key properties, \textit{i.e.}, the so-called moments, to be preserved by the simpler, reduced-order model. This is achieved by studying the interconnection of the large-scale system with a signal generator (Fig.~\ref{fig:Interconnection_MM}(a)) or a filter (Fig.~\ref{fig:Interconnection_MM}(b)) that incorporate the information of the desired behaviours of interest~\cite{ref:astolfi2010model}. On the other hand, the hierarchical levels in the ASHC technique hinge on the interconnection structure in Fig.~\ref{fig:Hierarchical_Structure}, as already described. In studying these interconnections, both techniques rely on the solution of closely related Sylvester equations. This observation hints at a relation between these two methods.

To study this relation, we start with revisiting the fundamentals of moment matching and ASHC techniques. Then we reinterpret the ASHC technique through the lens of moment matching, showing that the two key design requirements proposed in the ASHC technique, namely the bounded output discrepancy and the so-called $M$-relation\footnote{This is usually called $\Pi$-relation in the literature~\cite{ref:pappas2000hierarchically}. We changed the letter for notational reasons.}, can be interpreted as moment matching requirements. Based on this result, we provide guidance for future works that aim, on the one hand, at extending various moment matching results, such as nonlinear and data-driven methods, to address ASHC problems, and, on the other hand, at establishing new techniques for control via reduced-order models. In summary, the main contribution of this paper lies in the establishment of a conceptual bridge between the two techniques, showing the possibility that the developed methods in each field can mutually benefit one another. Further works that explore these directions are already in preparation.

The remainder of this paper is organised as follows. Section~\ref{sec:Prelimiaries} provides a brief review of the two techniques. Section~\ref{sec:compare} gives the relation between ASHC and moment matching. An illustrative example is presented in Section~\ref{sec:example}, followed by some concluding remarks in Section~\ref{sec:concl}.

\textbf{Notation.} The symbols $\mathbb{C}_{0}$ and $\mathbb{C}_{<0}$ denote the set of complex numbers with zero real part and strictly negative real part, respectively, while $\mathbb{R}_{\geq 0}$ represents the set of non-negative real numbers. The symbol $I_{n}$ indicates an $n \times n$ identity matrix, $\sigma(A)$ represents the spectrum of the matrix $A \in \mathbb{R}^{n \times n}$, while for $B \in\mathbb{R}^{n\times m}$, $B^\top$ denotes its transpose. A square matrix $A$ is positive definite if it is symmetric and all its eigenvalues are positive. Given a bounded function $x: \mathbb{R} \rightarrow \mathbb{R}^{n}$, $\|x\|_{\infty}$ denotes its $L^{\infty}$ norm defined by $L^\infty \coloneqq \sup _{t \geq 0}\|x(t)\|$, where $\|\cdot\|$ indicates the Euclidean norm. A function $\gamma: \mathbb{R}_{\geq 0} \rightarrow \mathbb{R}_{\geq 0}$ belongs to class $\mathcal{K}$ if it is continuous, strictly increasing, and satisfies $\gamma(0)=0$.

\section{Review of Moment Matching and ASHC Techniques}\label{sec:Prelimiaries}
In this section, we recall the moment matching and ASHC techniques for a class of (large-scale) linear time-invariant (LTI) systems of the form
\begin{equation}\label{equ:ConcreteSystem}
\begin{aligned}
    \Sigma:\quad\begin{array}{l}
    \dot{x}=A x+B u, \qquad
    y=C x,
    \end{array}
\end{aligned}
\end{equation}
where $x(t) \in \mathbb{R}^{n}$, $u(t) \in \mathbb{R}^{m}$, and $y(t) \in \mathbb{R}^{p}$. The real-valued matrices $A$, $B$, and $C$ are of appropriate dimensions and, without loss of generality, assumed to satisfy $\operatorname{rank}(B)=m$ and $\operatorname{rank}(C)=p$. Then, both the moment matching and ASHC techniques aim to find a simpler LTI model in the form
\begin{equation}\label{equ:ReducedSystem}
\begin{aligned}
    \Sigma^{\prime}:\quad\begin{array}{l}
    \dot{\xi}=F \xi+G v, \qquad
    \!\psi=H \xi,
    \end{array}
\end{aligned}
\end{equation}
where $\xi(t) \in \mathbb{R}^{\hat{n}}$, $v(t) \in \mathbb{R}^{\hat{m}}$, and $\psi(t) \in \mathbb{R}^{p}$, and the real-valued matrices $F$, $G$, and $H$ are of appropriate dimensions, with order $\hat{n} \leq n$. Note that in the MOR technique, systems~(\ref{equ:ConcreteSystem}) and~(\ref{equ:ReducedSystem}) are called the \textit{full-} and the \textit{reduced-order model}, respectively, with the same number of inputs, \textit{i.e.}, $\hat{m} = m$. In the ASHC technique, instead, the simpler system~(\ref{equ:ReducedSystem}) is called the \textit{abstract system} or \textit{abstraction}, while the large-scale system~(\ref{equ:ConcreteSystem}) is called the \textit{concrete system}. The number of inputs of these two systems is not necessarily the same.

\subsection{Moment Matching}\label{sec:PrelimiariesMM}
We first revisit the moment matching method, starting with the definition of moment of system~(\ref{equ:ConcreteSystem}).
\begin{define}[\cite{ref:scarciotti2024interconnection}]\label{def:moment}
    Consider system~(\ref{equ:ConcreteSystem}) and matrices $S \in \mathbb{R}^{\hat{n} \times \hat{n}}$ and $Q \in \mathbb{R}^{\hat{n} \times \hat{n}}$ with their spectra satisfying $\sigma(S) \cap \sigma(A) = \emptyset$ and $\sigma(Q) \cap \sigma(A) = \emptyset$. Let $L \in \mathbb{R}^{m \times \hat{n}}$ and $R \in \mathbb{R}^{\hat{n} \times p}$ be any two matrices such that the pair $(S, L)$ is observable and the pair $(Q, R)$ is reachable. Then
    the matrices $C\Pi$ and $\Upsilon B$ are called the \textit{moments} of system~(\ref{equ:ConcreteSystem}) at $(S,\, L)$ and $(Q,\, R)$ respectively, where $\Pi \in \mathbb{R}^{n \times \hat{n}}$ is the unique solution of the Sylvester equation
    \begin{equation}\label{equ:MomentDirectSylverter}
        \Pi S = A\Pi + BL,
    \end{equation}
    and $\Upsilon \in \mathbb{R}^{\hat{n} \times n}$ is the unique solution of the Sylvester equation
    \begin{equation}\label{equ:MomentSwappedSylverter}
        Q\Upsilon = \Upsilon A + RC.
        \vspace{-3mm}
    \end{equation}
    \hfill$\blacksquare$
\end{define}

\begin{remark}\label{rmk:interpolation}
    Originally moments were defined in the Laplace domain~\cite{ref:antoulas2005approximation}. More precisely, the $k$-th moment of a single-input single-output system~(\ref{equ:ConcreteSystem}) at an interpolation point $s \in \mathbb{C} \setminus \sigma(A)$ is defined as the $k$-th coefficient of the Laurent series expansion of the transfer function of system~(\ref{equ:ConcreteSystem}) at $s$. The moments computed at interpolation points equal to the eigenvalues of $S$ and $Q$ are nothing else that the elements of the matrices $C\Pi$ and $\Upsilon B$, see~\cite{ref:gallivan2006model,ref:astolfi2010model}, and~\cite{ref:shakib2023time} for the multiple-input, multiple-output case.

\end{remark}

It is possible to show~\cite{ref:astolfi2010model} that the model
\begin{equation}\label{equ:ReducedMMSL}
    \dot{\xi} = (S - GL) \xi + G u, \qquad
    \psi = C\Pi \xi
\end{equation}
matches the moments of~(\ref{equ:ConcreteSystem}) (meaning that the moments of the two models are identical) at $(S,\,L)$ for any $G$ such that $\sigma(S) \cap \sigma(S- G L)=\emptyset$. Likewise, the model
\begin{equation}\label{equ:ReducedMMQR}
    \dot{\xi} = (Q - RH) \xi + \Upsilon B u, \qquad
    \psi = H \xi
\end{equation}
matches the moments of~(\ref{equ:ConcreteSystem}) at $(Q,\, R)$ for any $H$ such that $\sigma(Q-R H) \cap \sigma(Q) = \emptyset$. Moreover, a model that matches the moments of~(\ref{equ:ConcreteSystem}) at $(S,\,L)$ and $(Q,\, R)$ simultaneously is called two-sided moment matching model. This two-sided matching can be achieved, for example, by model~(\ref{equ:ReducedMMSL}) with $G = (\Upsilon \Pi)^{-1} \Upsilon B$
, or by model~(\ref{equ:ReducedMMQR}) with $H = C \Pi(\Upsilon \Pi)^{-1}$, provided that $\Upsilon \Pi$ is invertible, see~\cite{ref:scarciotti2024interconnection} for more details.

Inspired by the regulator theory~\cite{ref:isidori1985nonlinear}, the Sylvester equations~(\ref{equ:MomentDirectSylverter}) and~(\ref{equ:MomentSwappedSylverter}) also induce a connection between moments and the steady-state output responses of two interconnection structures, namely the direct and swapped interconnections, as depicted in Fig.~\ref{fig:Interconnection_MM} and 
summarized as follows.


\begin{thm}[\cite{ref:scarciotti2024interconnection}] \label{thm:SteadyState}
Consider system~(\ref{equ:ConcreteSystem}) and suppose $\sigma(A) \subset \mathbb{C}_{<0}$. 
Let $S \in \mathbb{R}^{\hat{n} \times \hat{n}}$ and $Q \in \mathbb{R}^{\hat{n} \times \hat{n}}$ be any two 
matrices with simple eigenvalues satisfying $\sigma(S) \subset \mathbb{C}_{0}$ and $\sigma(Q) \subset \mathbb{C}_{0}$ with $\sigma(S) \cap \sigma(Q) = \emptyset$. Let $L \in \mathbb{R}^{m \times \hat{n}}$ and $R \in \mathbb{R}^{\hat{n} \times p}$ be any two matrices such that the pair $(S, L)$ is observable and the pair $(Q, R)$ is reachable. Then the following statements hold.
\begin{itemize}
    \item The moment at $(S, L)$ admits a one-to-one relation with the steady-state of output $y$ of the (direct) interconnection  between system~(\ref{equ:ConcreteSystem}) and the signal generator
    \begin{equation}
    \label{eq-genS}
        \dot{\omega} =S \omega, \quad \theta=L \omega,
    \end{equation}
    via $u = \theta$ (Fig.~\ref{fig:Interconnection_MM}(a)), provided $(S,\, \omega(0))$ is excitable~\cite{ref:padoan2017geometric}.
    \item The moment at $(Q, R)$ admits a one-to-one relation with the steady-state of output $\varpi$ of the (swapped) interconnection between system~(\ref{equ:ConcreteSystem}) and the filter
    \begin{equation}
    \label{eq-filQ}
        \dot{\varpi} = Q \varpi+R \eta
    \end{equation}
   via $\eta = y$ (Fig.~\ref{fig:Interconnection_MM}(b)), for $x(0) = \varpi(0) = 0$ and any non-zero signal $u$ that exponentially decays to zero.\hfill$\blacksquare$
\end{itemize}
\end{thm}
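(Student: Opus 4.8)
The plan is to treat the two interconnections separately, in each case changing coordinates into the solution of the associated Sylvester equation so as to split off the asymptotically stable part of the dynamics from the part whose asymptotic phase is dictated by $S$ (respectively $Q$).

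\textbf{Direct interconnection.} First I would write the closed loop formed by (\ref{equ:ConcreteSystem}) and (\ref{eq-genS}) under $u = \theta = L\omega$, namely $\dot x = Ax + BL\omega$, $\dot\omega = S\omega$, $y = Cx$. Since $\sigma(S) \subset \mathbb{C}_0$ and $\sigma(A) \subset \mathbb{C}_{<0}$ we have $\sigma(S) \cap \sigma(A) = \emptyset$, so (\ref{equ:MomentDirectSylverter}) has a unique solution $\Pi$. Setting $z = x - \Pi\omega$ and using (\ref{equ:MomentDirectSylverter}) gives $\dot z = Az$, hence $z(t) \to 0$ exponentially for every initial condition; therefore $x(t) - \Pi e^{St}\omega(0) \to 0$ and the steady-state output is $y_{\mathrm{ss}}(t) = C\Pi\, e^{St}\omega(0)$, i.e.\ it is exactly the moment $C\Pi$ that generates the steady-state response along the generator trajectory. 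For the one-to-one claim I would differentiate $y_{\mathrm{ss}}$ at $t = 0$, obtaining $y_{\mathrm{ss}}^{(k)}(0) = C\Pi S^k \omega(0)$, stack these relations for $k = 0, \dots, \hat{n}-1$, and invoke excitability of $(S, \omega(0))$ --- equivalently, nonsingularity of $[\,\omega(0)\ \ S\omega(0)\ \ \cdots\ \ S^{\hat{n}-1}\omega(0)\,]$ --- to solve uniquely for $C\Pi$; the reverse direction is immediate, which gives the bijection.

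\textbf{Swapped interconnection.} The closed loop formed by (\ref{equ:ConcreteSystem}) and (\ref{eq-filQ}) under $\eta = y$ is $\dot x = Ax + Bu$, $\dot\varpi = Q\varpi + RCx$ with $x(0) = \varpi(0) = 0$, and $\sigma(Q) \cap \sigma(A) = \emptyset$ again makes (\ref{equ:MomentSwappedSylverter}) uniquely solvable for $\Upsilon$. Here I would use the coordinate change $\zeta = \varpi + \Upsilon x$, which has $\zeta(0) = 0$ and, by (\ref{equ:MomentSwappedSylverter}), satisfies $\dot\zeta = Q\zeta + \Upsilon B\, u$, so $\zeta(t) = \int_0^t e^{Q(t-\tau)}\Upsilon B\, u(\tau)\, d\tau$. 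Since $u$ decays exponentially and $Q$ has simple eigenvalues on $\mathbb{C}_0$ (so $e^{Qt}$ is bounded), the vector $c \coloneqq \int_0^\infty e^{-Q\tau}\Upsilon B\, u(\tau)\, d\tau$ is well defined and $\zeta(t) - e^{Qt}c \to 0$; as $x(t) \to 0$ exponentially we also have $\varpi(t) - \zeta(t) \to 0$, so $\varpi_{\mathrm{ss}}(t) = e^{Qt}c$. Expanding $c$ over the eigenbasis of $Q$ and using that left-multiplying (\ref{equ:MomentSwappedSylverter}) by a left eigenvector $w_i$ of $Q$ at $\lambda_i$ gives $w_i^\top \Upsilon = w_i^\top RC(\lambda_i I - A)^{-1}$, hence $w_i^\top \Upsilon B = w_i^\top RC(\lambda_i I - A)^{-1} B$, I would exhibit $\varpi_{\mathrm{ss}}$ as an explicit linear image of the moment $\Upsilon B$; inverting the nonsingular eigenvector matrix of $Q$ and exploiting reachability of $(Q,R)$ together with $u \neq 0$ then yields the one-to-one correspondence.

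The asymptotic estimates above are routine once the two coordinate changes are in place; I expect the main obstacle to be the ``one-to-one'' statements, that is, pinning down precisely which functional of the steady-state trajectory is being inverted and verifying that excitability of $(S, \omega(0))$ in the direct case, and reachability of $(Q, R)$ together with $u \neq 0$ in the swapped case, are exactly what make that functional injective. A secondary technical point is to fix the notion of ``steady-state response'' and to justify separating the $\mathbb{C}_{<0}$- and $\mathbb{C}_0$-modes (the interchange of limit and integral), which is straightforward given the exponential decay of $z$, $x$, and $u$.
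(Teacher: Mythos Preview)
The paper does not prove Theorem~\ref{thm:SteadyState}; it is quoted verbatim from~\cite{ref:scarciotti2024interconnection} and the $\blacksquare$ merely closes the statement, after which the theorem is only \emph{used} (in the proofs of Theorems~\ref{thm:hierarchicalMatching}, \ref{thm:PiRelationMatching}, and \ref{thm:PiRelationMatchingSwapped}) but never argued. Your proposal is exactly the standard proof given in that reference and in~\cite{ref:astolfi2010model}: pass to the error coordinates $z=x-\Pi\omega$ and $\zeta=\varpi+\Upsilon x$ via the Sylvester solutions~(\ref{equ:MomentDirectSylverter})--(\ref{equ:MomentSwappedSylverter}), let the Hurwitz part decay, read off $y_{\mathrm{ss}}(t)=C\Pi\,e^{St}\omega(0)$ and $\varpi_{\mathrm{ss}}(t)=e^{Qt}c$, and then invoke excitability of $(S,\omega(0))$ (respectively reachability of $(Q,R)$ together with $u\not\equiv 0$) for the bijection. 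There is therefore nothing in the paper to compare against, and your plan is sound; the only place that deserves extra care, as you already flag, is making the ``one-to-one'' claim precise in the swapped MIMO case, where a single steady-state trajectory carries $\hat n$ scalars while $\Upsilon B$ has $\hat n m$ entries, so the bijection should be phrased at the level of the steady-state \emph{map} rather than a single response.
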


\begin{figure}[tbp]
\begin{centering}
    \includegraphics[width=0.82\linewidth]{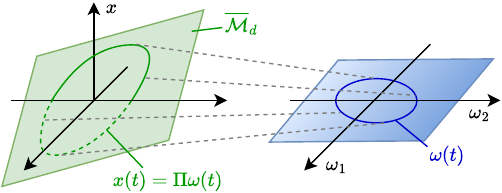}
    \caption{Visualisation of the Subspace $\overline{\mathcal{M}}_d$.}
    \label{fig:Manifold}
\end{centering}
\end{figure}

Taking a step back, the mere satisfaction of~(\ref{equ:MomentDirectSylverter}) and~(\ref{equ:MomentSwappedSylverter}) (without the additional assumptions in Theorem~\ref{thm:SteadyState}) implies the existence of two invariant subspaces, namely $\overline{\mathcal{M}}_d = \{(x, \omega) \in \mathbb{R}^{n+\hat{n}} \,|\, x = \Pi\omega \}$ in Fig.~\ref{fig:Interconnection_MM}(a) when $u = \theta$; and $\overline{\mathcal{M}}_s = \{(x, \varpi) \in \mathbb{R}^{n+\hat{n}} \,|\, \varpi = -\Upsilon x \}$ in Fig.~\ref{fig:Interconnection_MM}(b) when $u \equiv 0$. Moreover, under the assumptions of Theorem~\ref{thm:SteadyState}, it is possible to show~\cite{ref:simpson2024steady} that the dynamics on the globally exponentially attractive subspace $\overline{\mathcal{M}}_d$, visualised in Fig.~\ref{fig:Manifold}, are described by
\begin{equation}\label{eq-inv1}
\dot \omega = S \omega, \quad\text{ and }\quad    y = C\Pi \omega.
\end{equation}
Similarly, the dynamics of the error variable $\zeta = \varpi + \Upsilon x$ are described by
\begin{equation}\label{eq-inv2}
\dot \zeta = Q \zeta + \Upsilon B u.
\end{equation}
Thus, $C\Pi$ is the steady-state observation matrix relating $y$
to the state $\omega$ of the signal generator, while $\Upsilon B$ influences how the control $u$
impacts the deviation from the invariant subspace $\overline{\mathcal{M}}_s$~\cite{ref:simpson2024steady}.
Then one can see that the intuition behind the reduced-order model (\ref{equ:ReducedMMSL}) is that when $u$ is generated by (\ref{eq-genS}) and $A$ and $S-GL$ are Hurwitz, then  \eqref{equ:ConcreteSystem} and (\ref{equ:ReducedMMSL}) have the same asymptotic behaviour, which is described by (\ref{eq-inv1}). A similar intuition holds for the swapped interconnection. 
These observations are important for reinterpreting the ASHC technique in the next section.

In the context of model reduction, the benefit brought by Theorem~\ref{thm:SteadyState} lies in extending the notion of moment to general classes of systems, such as nonlinear and time-delay systems, see  \textit{e.g.}~\cite{ref:scarciotti2015model}, and providing data-driven approaches through the use of the invariant manifolds $\overline{\mathcal{M}}_d$ and $\overline{\mathcal{M}}_s$~\cite{ref:scarciotti2017data,ref:mao2024data}.

\subsection{Approximate-Simulation-based Hierarchical Control}\label{sec:PrelimiariesASHC}
Now we turn our focus to the ASHC technique, recalling its two key requirements. The ASHC technique aims to design a simpler abstract system and an interface function $u_{v}: \mathbb{R}^{\hat{m}}\times \mathbb{R}^{\hat{n}} \times \mathbb{R}^{n} \rightarrow \mathbb{R}^{m}$ to achieve the control synthesis of the large-scale system with lower computational costs, see Fig.~\ref{fig:Hierarchical_Structure}. To provide guarantees on the trajectories of the large-scale system, the first key requirement is that the error between the output trajectories of the large-scale and simpler systems is bounded through a so-called ``simulation function''~\cite{ref:girard2007approximation,ref:girard2009hierarchical}, defined as follows.

\begin{define}[\cite{ref:girard2009hierarchical}]\label{def:simuFunction}
Let $V\!:\! \mathbb{R}^{\hat{n}} \!\times\! \mathbb{R}^n \!\rightarrow\! \mathbb{R}_{\geq 0}$ be a smooth function and $u_{v}$ \!:\! $\mathbb{R}^{\hat{m}} \!\times\! \mathbb{R}^{\hat{n}} \!\times\! \mathbb{R}^n \!\rightarrow\! \mathbb{R}^m$ be a continuous function. $V$ is a \textit{simulation function} of system~(\ref{equ:ReducedSystem}) by system~(\ref{equ:ConcreteSystem}), and $u_v$ is an associated \textit{interface}, if there exists a class-$\mathcal{K}$ function $\gamma$ such that for all $(\xi, x) \in \mathbb{R}^{\hat{n}} \times \mathbb{R}^n$, $V(\xi, x) \geq\|H\xi - Cx\|$,
and for all $v \in \mathbb{R}^{\hat{m}}$ satisfying $\gamma(\|v\|)<V(\xi, x)$, then
$
\frac{\partial V(\xi, x)}{\partial \xi} (F\xi \!+\! G v) \!+\!\frac{\partial V(\xi, x)}{\partial x}\!\left(A x \!+\! B u_v(v, \xi, x)\right)<0.
$
\hfill$\blacksquare$
\end{define}

In fact, the above simulation and interface functions guarantee that the error between the output trajectories $\psi$ and $y$ of two systems satisfies 
$
    \|\psi(t)-y(t)\| \leq \max \left\{V(\xi(0), x(0)), \gamma\left(\|v\|_{\infty}\right)\right\}
$
for all $t \geq 0$, see~\cite[Theorem 1]{ref:girard2009hierarchical} for more details.

Before looking at how to design the simulation function and the interface, the following standard assumption is introduced.
\begin{assume}\label{asmp:stabilisability}
    System~(\ref{equ:ConcreteSystem}) is stabilisable.
\end{assume}

Then we have the following results.
\begin{lem}[\cite{ref:girard2009hierarchical}]
Suppose Assumption~\ref{asmp:stabilisability} holds. Then there exists a positive definite symmetric matrix $W \in \mathbb{R}^{n \times n}$ and a strictly positive scalar $\lambda$ such that 
\begin{equation*}
    W \succeq C^{\top} C, \quad
    (A+B K)^{\top} W+W(A+B K) \preceq -2 \lambda W ,
\end{equation*}
with $K \!\in\! \mathbb{R}^{m \times n}$ any matrix such that $\sigma(A \!+\! BK) \!\subset\! \mathbb{C}_{<0}$. \hfill$\blacksquare$
\end{lem}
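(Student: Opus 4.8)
The plan is to turn the stabilisability hypothesis into a Lyapunov inequality with a prescribed decay rate, and then to use the freedom of scaling the Lyapunov certificate to meet the lower bound $W \succeq C^{\top}C$. The key observations are: (i) a Hurwitz matrix stays Hurwitz after a sufficiently small shift $A + BK \mapsto A + BK + \lambda I_{n}$, which lets the factor $-2\lambda W$ be absorbed into a genuine Lyapunov equation; and (ii) that Lyapunov equation is linear in its solution, so multiplying it by a large positive constant scales up $W$ without affecting the qualitative inequality, which is exactly what is needed to dominate $C^{\top}C$.

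Concretely, I would proceed as follows. Fix any $K$ with $A_{K} \coloneqq A + BK$ Hurwitz; such a $K$ exists by Assumption~\ref{asmp:stabilisability}. Since $\sigma(A_{K}) \subset \mathbb{C}_{<0}$ and $\sigma(A_{K})$ is finite, the number $\mu \coloneqq -\max\{\operatorname{Re}(s): s \in \sigma(A_{K})\}$ is strictly positive, and for every $\lambda \in (0,\mu)$ the matrix $A_{K} + \lambda I_{n}$ has all its eigenvalues in $\mathbb{C}_{<0}$; I would simply take, say, $\lambda = \mu/2$. By the classical Lyapunov lemma, since $A_{K} + \lambda I_{n}$ is Hurwitz, the equation $(A_{K} + \lambda I_{n})^{\top}\bar W + \bar W(A_{K} + \lambda I_{n}) = -I_{n}$ has a unique solution $\bar W$, which is symmetric and positive definite. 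Finally, set $W \coloneqq \alpha \bar W$ with $\alpha \coloneqq \lambda_{\max}(C^{\top}C)/\lambda_{\min}(\bar W) > 0$ (or any larger scalar), where $\lambda_{\min}(\bar W)>0$ denotes the smallest eigenvalue of $\bar W$.

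It then remains to check the two claimed inequalities. For the first, $\bar W \succ 0$ gives $W = \alpha \bar W \succeq \alpha\,\lambda_{\min}(\bar W)\, I_{n} = \lambda_{\max}(C^{\top}C)\, I_{n} \succeq C^{\top}C$. For the second, scaling the Lyapunov equation by $\alpha$ yields $(A_{K} + \lambda I_{n})^{\top}W + W(A_{K} + \lambda I_{n}) = -\alpha I_{n}$, and rearranging gives $A_{K}^{\top}W + W A_{K} = -\alpha I_{n} - 2\lambda W \preceq -2\lambda W$, which is the stated inequality with $A + BK = A_{K}$. The only genuinely delicate point is that both requirements must hold for the same $W$: the decay inequality is invariant under positive scaling of $W$ while the bound $W \succeq C^{\top}C$ only improves under such scaling, so choosing $\alpha$ large reconciles them; everything else is a routine invocation of standard Lyapunov theory, so I do not expect any real obstacle beyond presenting it cleanly.
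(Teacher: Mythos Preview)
The paper does not actually prove this lemma: it is quoted verbatim from~\cite{ref:girard2009hierarchical} and marked with the $\blacksquare$ that the authors use to close \emph{statements}, not proofs. So there is no ``paper's own proof'' to compare against. That said, your argument is correct and is exactly the standard route one would take: shift the closed-loop matrix by $\lambda I_n$ to trade the weighted decay inequality for an honest Lyapunov equation, solve it, and then scale the solution to dominate $C^{\top}C$. The only cosmetic point worth tightening is the edge case $C=0$, where your formula gives $\alpha=0$; your parenthetical ``or any larger scalar'' already covers it, and in any case the paper's standing assumption $\operatorname{rank}(C)=p$ rules it out. Also note that the phrasing ``with $K$ any matrix such that $\sigma(A+BK)\subset\mathbb{C}_{<0}$'' should be read as you did, namely $K$ is fixed first and $W,\lambda$ are produced for that $K$; a single $W$ working for \emph{all} stabilising $K$ simultaneously is not what is being claimed.
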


\begin{thm}[\cite{ref:girard2009hierarchical}]\label{thm:simuFunction}
Suppose Assumption~\ref{asmp:stabilisability} holds and assume there exist matrices $P \in \mathbb{R}^{n \times \hat{n}}$ and $\hat{L} \in \mathbb{R}^{m \times \hat{n}}$ solving the linear matrix equations
\begin{subequations}\label{equ:SimuCondition}
    \begin{align}
     P F &= A P + B \hat{L}, \label{equ:SimuSylvester} \\
     H &= C P . \label{equ:SimuOutput}
    \end{align}
\end{subequations}
Then, the function $ V(\xi, x) \coloneqq \sqrt{(P \xi-x)^{\top} W(P \xi-x)}$
is a simulation function of system~(\ref{equ:ReducedSystem}) by~(\ref{equ:ConcreteSystem}) and the associated interface takes the form
\begin{equation}\label{equ:interface}
    u_v(v, \xi, x) \coloneqq \hat{R} v + \hat{L}\xi + K(x-P\xi),
\end{equation}
for any matrix $\hat{R} \in \mathbb{R}^{m \times \hat{m}}$.\hfill$\blacksquare$
\end{thm}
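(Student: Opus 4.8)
The plan is to verify directly that the candidate function $V(\xi,x) = \sqrt{(P\xi - x)^\top W (P\xi - x)}$ satisfies the two defining properties in Definition~\ref{def:simuFunction} with the interface~\eqref{equ:interface} and a suitable class-$\mathcal{K}$ function $\gamma$. First I would introduce the error coordinate $e \coloneqq P\xi - x$, which is the natural variable since $V(\xi,x) = \sqrt{e^\top W e}$. The inequality $V(\xi,x) \geq \|H\xi - Cx\|$ follows immediately from $W \succeq C^\top C$: indeed $H\xi - Cx = CP\xi - Cx = Ce$ by~\eqref{equ:SimuOutput}, so $\|H\xi - Cx\|^2 = e^\top C^\top C e \leq e^\top W e = V(\xi,x)^2$. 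This is the easy half.

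For the dissipation inequality, I would compute $\dot e$ along the dynamics of~\eqref{equ:ConcreteSystem} and~\eqref{equ:ReducedSystem} with $u = u_v(v,\xi,x)$. Using $\dot\xi = F\xi + Gv$, $\dot x = Ax + Bu_v$, and substituting the interface~\eqref{equ:interface}, the term $PF\xi$ becomes $AP\xi + B\hat L\xi$ by~\eqref{equ:SimuSylvester}, and the $B\hat L\xi$ cancels against the corresponding term in $Bu_v$; the $K(x - P\xi) = -Ke$ term assembles an $(A+BK)$ factor acting on $e$. The upshot should be $\dot e = (A+BK)e + (PG - B\hat R)v$. Then, writing $V^2 = e^\top W e$, one gets $2V\dot V = e^\top\big[(A+BK)^\top W + W(A+BK)\big]e + 2e^\top W(PG - B\hat R)v \leq -2\lambda e^\top W e + 2e^\top W(PG-B\hat R)v$, using the Lyapunov inequality from the preceding Lemma. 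Hence $V\dot V \leq -\lambda V^2 + e^\top W(PG - B\hat R)v$, and bounding the cross term by $\|W^{1/2}e\|\,\|W^{1/2}(PG-B\hat R)\|\,\|v\| = V\,\|W^{1/2}(PG-B\hat R)\|\,\|v\|$ gives $\dot V \leq -\lambda V + \|W^{1/2}(PG - B\hat R)\|\,\|v\|$ wherever $V > 0$.

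From this it is routine to extract a class-$\mathcal{K}$ function: choosing $\gamma(r) \coloneqq (\|W^{1/2}(PG - B\hat R)\| / \lambda)\, r$ (or any class-$\mathcal{K}$ function dominating it, adjusting if the norm factor vanishes so that $\gamma$ is genuinely strictly increasing), the condition $\gamma(\|v\|) < V(\xi,x)$ forces $\|W^{1/2}(PG-B\hat R)\|\,\|v\| < \lambda V$, whence $\dot V < 0$, which is exactly the strict-decrease requirement, since $\dot V$ equals the left-hand side $\frac{\partial V}{\partial \xi}(F\xi + Gv) + \frac{\partial V}{\partial x}(Ax + Bu_v)$ appearing in Definition~\ref{def:simuFunction}. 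I expect the main obstacle to be bookkeeping rather than conceptual: carefully carrying out the cancellations in $\dot e$ so that the stabilising pair $(A+BK)$ emerges cleanly and the residual input term is identified as $(PG - B\hat R)v$, and handling the non-smoothness of $V$ at $e = 0$ (where $\sqrt{\cdot}$ is not differentiable) — this is typically dispatched by noting that $V(\xi,x) = 0$ implies the output error is already zero, or by working with $V^2$ away from the origin and invoking the comparison argument there.
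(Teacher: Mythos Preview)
Your argument is correct and is essentially the standard Lyapunov-type proof one finds in the original reference. Note, however, that the present paper does not actually prove Theorem~\ref{thm:simuFunction}: it is quoted from~\cite{ref:girard2009hierarchical} and stated without proof. The error-dynamics computation you carry out does appear later in the paper, in~(\ref{equ:directStateError}), with the opposite sign convention $e_s = x - P\xi$ (yielding $\dot e_s = (A+BK)e_s + (B\hat R - PG)v$), but there it is used for the moment-matching reinterpretation rather than as a proof of the theorem. So there is nothing to compare against; your proposal simply supplies the omitted proof, and does so correctly, including the handling of the degenerate case $\|W^{1/2}(PG-B\hat R)\|=0$ and the non-differentiability of $V$ at $e=0$ (the latter being vacuous since $\gamma(\|v\|) < V = 0$ is impossible).
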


The second key requirement of the ASHC technique focuses on recovering any output trajectory generated by the large-scale system. This requirement is formulated as ``for any output of the concrete system, we can construct an input of the abstraction such that this produces the same output''. This is reminiscent, but different, of moment matching, which instead requires that
the simpler model possess the same steady state when subject to the same input generated by \eqref{eq-genS}. 
The ASHC requirement is specified as follows.
\begin{define}[\cite{ref:girard2009hierarchical}]\label{def:PiRelation}
System~(\ref{equ:ReducedSystem}) is \textit{$M$-related} to system~(\ref{equ:ConcreteSystem}) if there exists a matrix $M \in \mathbb{R}^{\hat{n} \times n}$ such that for all $x \in \mathbb{R}^n$ and $u \in \mathbb{R}^m$, there exists $v \in \mathbb{R}^{\hat{m}}$ satisfying
\begin{subequations}\label{equ:PiRelation}
\begin{align}
    M(A x+B u) &= F M x+G v, \label{equ:PiRelationState} \\
    C &= H M. \label{equ:PiRelationOutput}
\end{align}
\end{subequations}
\end{define}
\hfill$\blacksquare$

Note that the $M$-relation yields that for any state trajectories of system~(\ref{equ:ConcreteSystem}) with any input function $u$, there exists an input function $v$ to the simpler system~(\ref{equ:ReducedSystem}) such that the subspace $\mathcal{M}_s = \{(x, \xi) \,|\, \xi = M x \}$ is invariant, see~\cite{ref:pappas2000hierarchically} for more details. Then the following result holds on the output level.
\begin{thm}[\cite{ref:girard2009hierarchical}]\label{thm:PiRelationOutputMatch}
    If system~(\ref{equ:ReducedSystem}) is $M$-related to~(\ref{equ:ConcreteSystem}), then for any output trajectory $y$ of system~(\ref{equ:ConcreteSystem}), $\psi=y$ is an output trajectory of system~(\ref{equ:ReducedSystem}) for some input function $v$. \hfill$\blacksquare$
\end{thm}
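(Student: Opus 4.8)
The plan is to show that the image of any concrete state trajectory under the matrix $M$ is itself a state trajectory of the abstraction~\eqref{equ:ReducedSystem} that produces exactly the same output. First I would fix an arbitrary output trajectory $y$ of $\Sigma$: by definition there exist an initial condition $x(0)\in\mathbb{R}^n$ and an admissible input $u(\cdot)$ whose corresponding state trajectory $x(\cdot)$ satisfies $\dot x = Ax + Bu$ and $y = Cx$. I would then define $\xi(t)\coloneqq M x(t)$ and construct, pointwise in time, a candidate input $v(\cdot)$ for $\Sigma'$ driving $\xi$.

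For the construction of $v$, note that Definition~\ref{def:PiRelation} guarantees that for every pair $(x,u)$ the vector $M(Ax+Bu) - FMx$ lies in the column space of $G$; equivalently $\operatorname{Im}(MA-FM)+\operatorname{Im}(MB)\subseteq\operatorname{Im}(G)$. Hence the minimum-norm choice $v(t)\coloneqq G^{+}\big(M(Ax(t)+Bu(t)) - FMx(t)\big)$, with $G^{+}$ the Moore--Penrose pseudo-inverse of $G$, is well defined and satisfies $Gv(t) = M(Ax(t)+Bu(t)) - FMx(t)$ for all $t$, i.e.\ it realises~\eqref{equ:PiRelationState} with $x=x(t)$, $u=u(t)$. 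Since this $v(\cdot)$ is a fixed linear map applied to $(x(\cdot),u(\cdot))$, it inherits the regularity of $u$ and is an admissible input for $\Sigma'$. Differentiating $\xi$ then yields $\dot\xi(t) = M\dot x(t) = M(Ax(t)+Bu(t)) = FMx(t) + Gv(t) = F\xi(t) + Gv(t)$, so $\xi$ is the state trajectory of $\Sigma'$ driven by $v$ from $\xi(0)=Mx(0)$. Finally, using~\eqref{equ:PiRelationOutput}, the output of $\Sigma'$ along this trajectory is $\psi(t) = H\xi(t) = HMx(t) = Cx(t) = y(t)$ for all $t\geq 0$, which is the claim.

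The only real subtlety, and therefore the step I would be most careful about, is ensuring that the pointwise selection of $v$ delivers a legitimate input \emph{function} rather than a merely set-valued assignment $(x,u)\mapsto\{v:\eqref{equ:PiRelationState}\text{ holds}\}$; the pseudo-inverse choice above resolves this in the LTI setting by making $v$ an explicit linear (hence measurability- and continuity-preserving) function of the data $(x,u)$, so no general measurable-selection argument is needed. A further minor point is that the inclusion $\operatorname{Im}(MA-FM)+\operatorname{Im}(MB)\subseteq\operatorname{Im}(G)$ must be feasible for the chosen input dimension $\hat m$, but this is exactly what the hypothesis that $\Sigma'$ is $M$-related to $\Sigma$ presupposes.
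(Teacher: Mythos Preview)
Your proof is correct and follows the same approach the paper sketches: it uses the $M$-relation to show that the subspace $\mathcal{M}_s=\{(x,\xi)\mid \xi=Mx\}$ can be rendered invariant by an appropriate choice of $v$, and then invokes $C=HM$ to conclude $\psi=y$. The paper does not give its own proof of this theorem (it is cited from~\cite{ref:girard2009hierarchical} and motivated by the invariant-subspace remark preceding it), but your explicit pseudo-inverse construction $v=G^{+}(M(Ax+Bu)-FMx)$ is exactly the kind of linear selection that makes the argument rigorous in the LTI setting, and is in the same spirit as the later explicit choice $v=Nx+\Gamma u$ in Theorem~\ref{thm:PiRelationSyl}.
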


To summarize, the ASHC technique for the large-scale system~(\ref{equ:ConcreteSystem}) is solved if one can design a simpler system~(\ref{equ:ReducedSystem}) and an interface~(\ref{equ:interface}) that satisfies both~(\ref{equ:SimuCondition}), which guarantees the boundedness of the error between output trajectories of the two systems, and~(\ref{equ:PiRelation}), which avoids the loss of controllable behaviours of system~(\ref{equ:ConcreteSystem}) when constructing the abstract system~(\ref{equ:ReducedSystem}). These two requirements are viewed next through the lens of moment matching. 

\section{ASHC in the Moment Matching Framework}\label{sec:compare}
In this section, we establish the connection between moment matching and ASHC techniques by reinterpreting the two key requirements posed by the ASHC technique explained in Section~\ref{sec:Prelimiaries} through the lens of the direct and swapped interconnections in Fig.~\ref{fig:Interconnection_MM}. To this end, we first look at the bounded output discrepancy requirement and study the $M$-relation requirement after that.

\subsection{Bounded Output Discrepancy}\label{sec:compareOutputBound}
The first design requirement of the ASHC technique is that the discrepancy between the output trajectories of the large-scale system~(\ref{equ:ConcreteSystem}) controlled by an interface~(\ref{equ:interface}), and the simpler system~(\ref{equ:ReducedSystem}) is bounded by a simulation function. Recall that Theorem~\ref{thm:simuFunction} yields that, under Assumption~\ref{asmp:stabilisability}, the output discrepancy remains bounded for any input $v \in L^\infty$ if there exist $P \in \mathbb{R}^{n \times \hat{n}}$ and $\hat{L} \in \mathbb{R}^{m \times \hat{n}}$ solving~(\ref{equ:SimuCondition}). 

Since Theorem~\ref{thm:SteadyState} points out the one-to-one relations between moments and steady-state responses, we reinterpret equations~(\ref{equ:SimuCondition}), together with the interface function~(\ref{equ:interface}), in the moment matching framework. To this end, for specific configurations, the hierarchical structure shown in Fig.~\ref{fig:Hierarchical_Structure} can be regarded as a cascade interconnection similar to the ones in Fig.~\ref{fig:Interconnection_MM}. Hereto, denote the interface $u_v$ in~(\ref{equ:interface}) as $u_v = \hat{R}v + \hat{L}\xi + K(x-P\xi) = Kx + u_v^{*}$ with $u_v^{*} = \hat{R}v + (\hat{L}- KP)\xi$. Then the hierarchical structure in Fig.~\ref{fig:Hierarchical_Structure} is equivalent to the interconnection shown in Fig.~\ref{fig:Interconnect_Output}, where system~(\ref{equ:ReducedSystem}) is cascaded with the system
\begin{equation}\label{equ:ConcreteSysStable}
    \dot{x} = (A + BK) x + B u, \quad y = C x,
\end{equation}
via the link $u = u_v^{*} = \hat{R}v + (\hat{L} - KP)\xi$, with $K$ such that $\sigma(A +BK) \subset \mathbb{C}_{<0}$. We first study the case for $v \equiv 0$, for which we have the following result.

\begin{figure}[tbp]
\begin{centering}
    \includegraphics[width=0.85\linewidth]{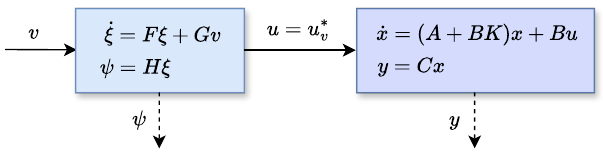}
    \caption{Interconnection for the interpretation of the linear matrix equations~(\ref{equ:SimuCondition}) with a stabilising link $u^{*}_v = \hat{R}v + (\hat{L}- KP)\xi$.}
    \label{fig:Interconnect_Output}
\end{centering}
\end{figure}

\begin{thm}\label{thm:hierarchicalMatching}
    Consider the hierarchical structure shown in Fig.~\ref{fig:Hierarchical_Structure} and, equivalently, Fig.~\ref{fig:Interconnect_Output}. Suppose $v \equiv 0$ and Assumption~\ref{asmp:stabilisability} holds. Let $F \in \mathbb{R}^{\hat{n} \times \hat{n}}$ be any matrix with simple eigenvalues satisfying $\sigma(F) \subset \mathbb{C}_0$ and $\sigma(F) \cap \sigma(A) = \emptyset$. Let $\hat{L}$ and $K$ in~(\ref{equ:interface}) be such that $(F,\, \hat{L})$ is observable and $\sigma(A +BK) \subset \mathbb{C}_{<0}$. Let $P \in \mathbb{R}^{n \times \hat{n}}$ be the unique solution of~(\ref{equ:SimuSylvester}) and let the pair $(F,\, \xi(0))$ be excitable. Then, there exists a one-to-one relation between the moment $CP$ of system~(\ref{equ:ConcreteSystem}) at $(F,\, \hat{L})$ and the steady-state output $y$ of system~(\ref{equ:ConcreteSystem}) in Fig.~\ref{fig:Interconnect_Output}. 
\end{thm}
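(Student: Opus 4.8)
The plan is to recognise that, with $v\equiv 0$, the cascade in Fig.~\ref{fig:Interconnect_Output} is \emph{exactly} a direct interconnection of the kind in Fig.~\ref{fig:Interconnection_MM}(a), and then to read off the steady state from the associated invariant subspace, as behind Theorem~\ref{thm:SteadyState}. First I would make the reduction explicit: with $v\equiv 0$ the abstraction~(\ref{equ:ReducedSystem}) becomes the autonomous signal generator $\dot\xi=F\xi$, and the signal driving system~(\ref{equ:ConcreteSysStable}) through the link is $\theta=u_v^{*}=(\hat L-KP)\xi$; hence the interconnected dynamics are $\dot\xi=F\xi$, $\dot x=(A+BK)x+B(\hat L-KP)\xi$, $y=Cx$, i.e.\ a direct interconnection between the now-Hurwitz system~(\ref{equ:ConcreteSysStable}) and the signal generator $(F,\hat L-KP)$.

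Second, I would show that the moment governing this picture is still $CP$. The Sylvester equation attached to the direct interconnection of system~(\ref{equ:ConcreteSysStable}) with $(F,\hat L-KP)$ reads $\Pi^{*}F=(A+BK)\Pi^{*}+B(\hat L-KP)$; substituting $\Pi^{*}=P$ and using~(\ref{equ:SimuSylvester}) gives $PF=(A+BK)P+B(\hat L-KP)=AP+B\hat L$, which holds by hypothesis, and $P$ is the unique such solution because $\sigma(F)\subset\mathbb{C}_0$ is disjoint from $\sigma(A+BK)\subset\mathbb{C}_{<0}$. Thus the feedback term $Kx$ in the interface~(\ref{equ:interface}) has been absorbed into $A+BK$ without altering the relevant Sylvester solution, so the moment of system~(\ref{equ:ConcreteSysStable}) at $(F,\hat L-KP)$ coincides with the moment $CP$ of system~(\ref{equ:ConcreteSystem}) at $(F,\hat L)$ (the latter being well defined since $(F,\hat L)$ is observable and $\sigma(F)\cap\sigma(A)=\emptyset$).

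Third, I would extract the steady state. The graph subspace $\{(\xi,x):x=P\xi\}$ is invariant: writing $e\coloneqq x-P\xi$ one computes $\dot e=(A+BK)x+B(\hat L-KP)\xi-PF\xi=(A+BK)(x-P\xi)=(A+BK)e$ by the identity above, and since $A+BK$ is Hurwitz this subspace is globally exponentially attractive. Because $\sigma(F)\subset\mathbb{C}_0$ consists of simple eigenvalues, $e^{Ft}\xi(0)$ is bounded, so $y(t)=CP\,e^{Ft}\xi(0)+Ce(t)$ with $Ce(t)\to 0$ exponentially, and the steady-state output is $y_{ss}(t)=CP\,e^{Ft}\xi(0)$. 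Finally, since $(F,\xi(0))$ is excitable, the minimal $F$-invariant subspace containing $\xi(0)$ is all of $\mathbb{R}^{\hat n}$, so $CP$ is the unique matrix with $CP\,e^{Ft}\xi(0)=y_{ss}(t)$ for all $t\ge 0$; conversely $CP$ determines $y_{ss}$. This is precisely the one-to-one relation claimed, and it is the first bullet of Theorem~\ref{thm:SteadyState} specialised to system~(\ref{equ:ConcreteSysStable}) and the generator $(F,\hat L-KP)$.

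The main obstacle I anticipate is the bookkeeping around the interface feedback: one must check carefully that rewriting $u_v=Kx+u_v^{*}$ genuinely turns Fig.~\ref{fig:Hierarchical_Structure} into a moment-matching cascade \emph{and} that this rewriting leaves the pertinent moment equal to $CP$ — the latter being the short but decisive computation $PF=(A+BK)P+B(\hat L-KP)=AP+B\hat L$, combined with the uniqueness granted by $\sigma(F)\cap\sigma(A+BK)=\emptyset$. A secondary subtlety is that Theorem~\ref{thm:SteadyState} as stated would require observability of $(F,\hat L-KP)$, whereas here only $(F,\hat L)$ is assumed observable; this is why I would establish the steady-state identity and the one-to-one correspondence directly from the invariant-subspace description (cf.~\cite{ref:simpson2024steady}) and from excitability of $(F,\xi(0))$ (cf.~\cite{ref:padoan2017geometric}), rather than quoting Theorem~\ref{thm:SteadyState} verbatim.
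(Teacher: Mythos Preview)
Your proposal is correct and follows the paper's approach almost verbatim for the two substantive steps: (i) with $v\equiv 0$ the hierarchical structure collapses to a direct interconnection of the stabilised plant~(\ref{equ:ConcreteSysStable}) with the generator $(F,\hat L-KP)$, and (ii) the closed-loop Sylvester equation $\Pi^{*}F=(A+BK)\Pi^{*}+B(\hat L-KP)$ has $\Pi^{*}=P$ as its unique solution, so the relevant moment is still $CP$. The paper shows (ii) by subtracting~(\ref{equ:SimuSylvester}) from the closed-loop equation and using $\sigma(F)\cap\sigma(A+BK)=\emptyset$; your direct substitution is the same computation.

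Where you diverge is in the closing argument. The paper does \emph{not} bypass the observability hypothesis of Theorem~\ref{thm:SteadyState}: it proves, via a short PBH contradiction, that $(F,\hat L-KP)$ is observable whenever $(F,\hat L)$ is, and then invokes Theorem~\ref{thm:SteadyState} directly. You instead derive the steady-state identity $y_{ss}(t)=CP\,e^{Ft}\xi(0)$ from the error dynamics $\dot e=(A+BK)e$ and then use excitability of $(F,\xi(0))$ to obtain the one-to-one correspondence by hand. Both routes are valid; yours is more self-contained and avoids an auxiliary lemma, while the paper's route yields the additional structural fact that observability of $(F,\hat L)$ is inherited by $(F,\hat L-KP)$, which cleanly places the closed-loop cascade inside the standard moment-matching picture rather than working around it.
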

\begin{proof}
    Following Definition~\ref{def:moment}, as $\sigma(F) \cap \sigma(A) = \emptyset$ and $(F,\, \hat{L})$ is observable, $CP$ is the moment of system~(\ref{equ:ConcreteSystem}) at $(F, \hat{L})$.
    When $v \equiv 0$, the hierarchical structure in Fig.~\ref{fig:Interconnect_Output}, can be regarded as a direct interconnection formed by the generator~(\ref{equ:ReducedSystem}) and system~(\ref{equ:ConcreteSysStable}). Since $\sigma(F) \cap \sigma(A + BK) = \emptyset$ and the pair $(F, \hat{L} - KP)$ is observable (which we prove later), $C \overline{P}$ denotes the moment of~(\ref{equ:ConcreteSysStable}) at $(F, \hat{L} - KP)$, where $\overline{P} \in \mathbb{R}^{n \times \hat{n}}$ is the unique solution to
    \begin{equation}\label{equ:SimuSylversterClosed}
        \overline{P} F = (A + B K) \overline{P} + B (\hat{L} - KP),
    \end{equation}
    with $P$ solving~(\ref{equ:SimuSylvester}), see Definition~\ref{def:moment}. By subtracting~(\ref{equ:SimuSylvester}) from~(\ref{equ:SimuSylversterClosed}), we obtain
    $(\overline{P} - P) F = A (\overline{P} - P) + B\hat{L} + BK(\overline{P} - P) - B\hat{L} = (A + BK)(\overline{P} - P).$
    Since $\sigma(F) \cap \sigma(A + BK) = \emptyset$, it follows that $\overline{P} = P$. Hence, the moment $C\overline{P}$ of system~(\ref{equ:ConcreteSysStable}) at $(F, \hat{L} - KP)$ coincides with the moment $CP$ of system~(\ref{equ:ConcreteSystem}) at $(F, \hat{L})$, and, by Theorem~\ref{thm:SteadyState}, has a one-to-one relation with the steady-state output $y$ with $v\equiv 0$ as the pair $(F, \hat{L} - KP)$ is observable. To show this observability property, we assume, by contradiction, that the pair $(F, \hat{L} - KP)$ is not observable, \textit{i.e.}, by the PBH test, there exists a non-zero vector $\varepsilon \in \mathbb{R}^{\hat{n}}$ such that $F\varepsilon = \lambda_F \varepsilon$ and $(\hat{L} - KP)\varepsilon = 0$ with $\lambda_F \in \sigma(F)$. Then right-multiplying $\varepsilon$ to both sides of~(\ref{equ:SimuSylversterClosed}) implies $(\lambda_F I_n - A - BK)\overline{P}\varepsilon = B (\hat{L} - KP)\varepsilon = 0$. Since $\sigma(F) \cap \sigma(A + BK) = \emptyset$, $\operatorname{rank} (\lambda_F I_n - A - BK) = n$ and therefore $\overline{P}\varepsilon = P\varepsilon = 0$. Consequently, $(\hat{L} - KP)\varepsilon = \hat{L}\varepsilon = 0$, which, by PBH test, contradicts the observability of $(F, \hat{L})$. Therefore, the pair $(F, \hat{L} - KP)$ is observable, and this concludes the proof.
\end{proof}


The result of Theorem~\ref{thm:hierarchicalMatching} provides a moment matching perspective of the bounded output discrepancy requirement. Following the proof of Theorem~\ref{thm:hierarchicalMatching}, system~(\ref{equ:ReducedSystem})  with $G = 0$ is the same as system~(\ref{eq-inv1}), for $S=F$ and $L=\hat{L}$ (and so $\Pi = P$). Thus, the abstract system with $H=CP$ behaves as a 
``\textit{limiting}'' reduced-order model. 
It should be stressed that the abstract system is not of the form (\ref{equ:ReducedMMSL}), because $G=0$, and consequently it does not match the moments $CP$ at $(F,\hat{L})$. But, as explained in Section~\ref{sec:PrelimiariesMM}, the limiting system~(\ref{eq-inv1}) is the system towards which the reduced-order model~(\ref{equ:ReducedMMSL}) tends to when $u$ is driven by~(\ref{eq-genS}) via $u = \theta$ and $S-GL$ is Hurwitz. Thus, even though the abstract system is not a reduced-order model by moment matching, it still has the property of matching, for $v\equiv 0$,
the steady-state output $y$ of the large-scale system~(\ref{equ:ConcreteSystem}).





Continuing with this steady-state matching analysis, we now study the case 
when the input $v \not\equiv 0$. By denoting $e_s = x - P \xi$, the satisfaction of~(\ref{equ:SimuCondition}) and~(\ref{equ:interface}) implies
\begin{equation}\label{equ:directStateError}
    \!\!\!\!\!\!\begin{array}{rl}
    \dot{e}_s\!\!\!\!\! &= Ax + Bu_v - PF\xi - PGv \\
        &= Ae_s \!+\! AP\xi \!+\! B\hat{R}v \!+\! B\hat{L}\xi \!+\! BKe_s \!-\! PF\xi \!-\! PGv \\
        &= (A + BK)e_s + (B\hat{R} - PG)v,
    \end{array}
\end{equation}
where $K$ is such that $\sigma(A + BK) \subset \mathbb{C}_{<0}$. Now by~(\ref{equ:SimuOutput}), the discrepancy in the output trajectories of two systems satisfies 
$
e_y = y - \psi = Cx - H\xi = Ce_s.
$
On the one hand, this relation coincides with the previous statement that when $v \equiv 0$, the set $\mathcal{M}_d = \{(x, \xi)\, |\, x = P\xi \}$ is exponentially attractive and invariant for all $(x(0), \xi(0)) \in \mathbb{R}^{n+\hat{n}}$. This results in the exponential decay of $e_y$ to zero and the matching of the steady states of the two output trajectories $\psi$ and $y$. On the other hand, when $v \not\equiv 0$, exponential stability of~(\ref{equ:directStateError}) implies that $e_s$ (and $e_y$) are bounded for any input $v \in L^\infty$. This result is consistent with the target of the ASHC technique. In summary, when $v \equiv 0$, the steady-state outputs are matched, and when $v \not\equiv 0$, the two outputs have a bounded error, which can be minimized by an opportune selection of $\hat{R}$ in~(\ref{equ:interface}).

\begin{remark}
    The matching of the steady-state outputs of the two systems~(\ref{equ:ConcreteSystem}) and~(\ref{equ:ReducedSystem}) when $v \equiv 0$ is consistent with Definition~\ref{def:simuFunction}, because the simulation function guarantees that $||\psi(t) \!-\! y(t)||$ decays to zero as $\gamma(\|v(t)\|) \!=\! 0$ for all times.
\end{remark}

\subsection{$M$-relation}\label{sec:comparePiRelation}
Now we discuss the second key design requirement in the ASHC technique: the $M$-relation. To this end, we first show that the $M$-relation requirement~(\ref{equ:PiRelation}) can also be characterised by Sylvester equations. 

\begin{thm}\label{thm:PiRelationSyl}
System~(\ref{equ:ReducedSystem}) is $M$-related to system~(\ref{equ:ConcreteSystem}) if there exist matrices $N \in \mathbb{R}^{\hat{m} \times n}$ and $\Gamma \in \mathbb{R}^{\hat{m} \times m}$ solving
\begin{subequations}\label{equ:PiRelationV2}
\begin{align}
    M A &= F M + G N, \label{equ:PiRelationSyl} \\
    G \Gamma &= M B, \label{equ:PiRelationInput} \\
    C &= H M, \label{equ:PiRelationOutputV2}
\end{align}
\end{subequations}
for some matrix $M \in \mathbb{R}^{\hat{n} \times n}$.
\end{thm}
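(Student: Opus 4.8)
The plan is to show that the Sylvester-type conditions~(\ref{equ:PiRelationV2}) are sufficient for the $M$-relation as given in Definition~\ref{def:PiRelation}, i.e.\ that they imply~(\ref{equ:PiRelation}). First I would observe that~(\ref{equ:PiRelationOutputV2}) is literally~(\ref{equ:PiRelationOutput}), so that part is immediate and the real work is deriving~(\ref{equ:PiRelationState}) from~(\ref{equ:PiRelationSyl}) and~(\ref{equ:PiRelationInput}). Fix an arbitrary $x \in \mathbb{R}^n$ and $u \in \mathbb{R}^m$; the goal is to exhibit a $v \in \mathbb{R}^{\hat{m}}$ with $M(Ax+Bu) = FMx + Gv$. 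The natural candidate, suggested directly by the two equations, is $v \coloneqq Nx + \Gamma u$. I would then compute $FMx + Gv = FMx + GNx + G\Gamma u = (FM + GN)x + (G\Gamma)u$, and substitute~(\ref{equ:PiRelationSyl}) and~(\ref{equ:PiRelationInput}) to get $(FM+GN)x + G\Gamma u = MAx + MBu = M(Ax+Bu)$, which is exactly~(\ref{equ:PiRelationState}). Since $x$ and $u$ were arbitrary, the existence requirement in Definition~\ref{def:PiRelation} is met with the explicit choice $v = Nx + \Gamma u$, and combined with~(\ref{equ:PiRelationOutputV2}) this establishes that system~(\ref{equ:ReducedSystem}) is $M$-related to system~(\ref{equ:ConcreteSystem}).

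The key steps in order are therefore: (i) note that the output condition transfers verbatim; (ii) decompose the required $v$ additively into a state-dependent part $Nx$ and an input-dependent part $\Gamma u$, motivated by the structure of~(\ref{equ:PiRelationSyl}) and~(\ref{equ:PiRelationInput}); (iii) verify by direct substitution that this $v$ satisfies~(\ref{equ:PiRelationState}) for all $x,u$. There is essentially no obstacle here — the statement is an implication ``if these equations are solvable then the relation holds,'' and the proof is a one-line linear-algebra verification once the right ansatz for $v$ is written down. If anything, the only subtlety worth a sentence is emphasising that~(\ref{equ:PiRelationV2}) splits the single condition~(\ref{equ:PiRelationState}) — which only asserts \emph{existence} of some $v$ per $(x,u)$ — into a stronger but more tractable pair of matrix equations that make $v$ a fixed linear function of $(x,u)$; this is precisely what makes the condition checkable and connects it to the Sylvester-equation viewpoint used throughout the paper. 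I would close by remarking that $v = Nx + \Gamma u$ is the abstraction-level input realising the invariance of $\mathcal{M}_s = \{(x,\xi)\,|\,\xi = Mx\}$, tying the result back to the interpretation given after Definition~\ref{def:PiRelation}.
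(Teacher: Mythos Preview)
Your proposal is correct and follows exactly the same approach as the paper: choose $v = Nx + \Gamma u$ and verify~(\ref{equ:PiRelationState}) by direct substitution using~(\ref{equ:PiRelationSyl}) and~(\ref{equ:PiRelationInput}), with~(\ref{equ:PiRelationOutputV2}) handling the output condition verbatim. The paper's proof is the same one-line computation, just stated more tersely.
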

\begin{proof}
    Consider the right side of (\ref{equ:PiRelationState}). By substituting $v = Nx + \Gamma u$, we have
   $ F M x + G v =  F M x + GNx + G\Gamma u = M A x + M B u  = M(Ax + Bu)$, where we have used \eqref{equ:PiRelationSyl} and \eqref{equ:PiRelationInput}.
    From here, we conclude that~(\ref{equ:PiRelationV2}) implies~(\ref{equ:PiRelation}), proving the claim.
\end{proof}

Theorem~\ref{thm:PiRelationSyl} proposes a sufficient condition for the $M$-relation based on a Sylvester equation and the restriction that the input $v$ that drives system~(\ref{equ:ReducedSystem}) takes the form 
\begin{equation}\label{equ:PiRelationLink}
   v = Nx + \Gamma u.
\end{equation}
This restriction, in fact, is naturally satisfied by the existing geometric method proposed by~\cite{ref:girard2009hierarchical} for finding a valid simpler system that satisfies both~(\ref{equ:SimuCondition}) and~(\ref{equ:PiRelation}). The method is summarised as follows.

\begin{figure}[tbp]
\begin{centering}
    \includegraphics[width=0.85\linewidth]{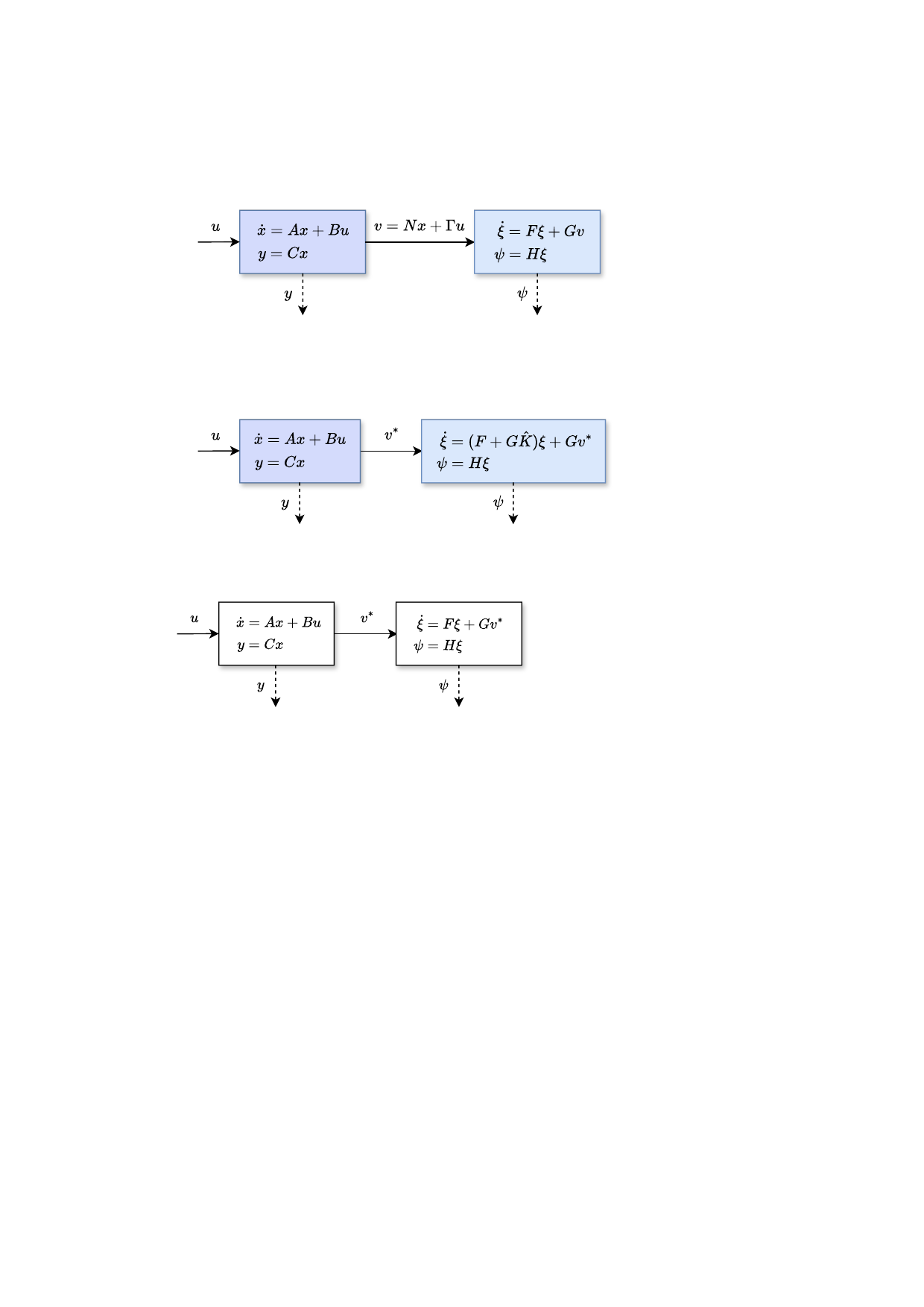}
    \caption{Interconnection for the interpretation of the linear matrix equations~(\ref{equ:PiRelationV2}).}
    \label{fig:Interconnect_PiRelation}
\end{centering}
\end{figure}

\begin{thm}[\cite{ref:girard2009hierarchical}]\label{thm:abstractDesign}
Consider the systems~(\ref{equ:ConcreteSystem}) and~(\ref{equ:ReducedSystem}). Let
\begin{itemize}
    \item $P \in \mathbb{R}^{n \times \hat{n}}$ be an injective map such that $\operatorname{im}(A P) \subseteq \operatorname{im}(P)+\operatorname{im}(B)$ and $\operatorname{im}(P)+\operatorname{ker}(C)=\mathbb{R}^n$;
    \item $D \in \mathbb{R}^{n \times (\hat{m} -m)}$, $E \in \mathbb{R}^{(\hat{m} - m) \times n}$, and $M \in \mathbb{R}^{\hat{n} \times n}$ satisfy $\operatorname{im}(D) \subseteq \operatorname{ker}(C)$, $\operatorname{im}(P) \oplus \operatorname{im}(D)=\mathbb{R}^n$, $M P = I_{\hat{n}}$, and $P M+D E = I_n$;
    \item $F \in \mathbb{R}^{\hat{n} \times \hat{n}}$ and $\hat{L} \in \mathbb{R}^{m \times \hat{n}}$ be such that $AP \!=\! PF \!-\! B\hat{L}$;
    \item $H=C P$ and $G=\left[MB \ \, MAD \right]$.
\end{itemize}
Then,~(\ref{equ:SimuCondition}) and (\ref{equ:PiRelation}) hold and therefore system~(\ref{equ:ReducedSystem}) is $M$-related to~(\ref{equ:ConcreteSystem}). \hfill$\blacksquare$
\end{thm}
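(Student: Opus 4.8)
The plan is to verify, one equation at a time, that the construction in Theorem~\ref{thm:abstractDesign} satisfies the hypotheses of Theorem~\ref{thm:simuFunction} (namely~(\ref{equ:SimuCondition})) and of Theorem~\ref{thm:PiRelationSyl} (namely~(\ref{equ:PiRelationV2})), since the latter is a sufficient condition for the $M$-relation~(\ref{equ:PiRelation}). First I would handle the easy identities. Equation~(\ref{equ:SimuOutput}), $H=CP$, and~(\ref{equ:PiRelationOutputV2}), $C=HM$, are immediate: the former is a defining assignment, and the latter follows by substituting $H=CP$ and using $PM+DE=I_n$ together with $\mathrm{im}(D)\subseteq\ker(C)$, so that $HM=CPM=C(I_n-DE)=C-CDE=C$. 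Equation~(\ref{equ:SimuSylvester}), $PF=AP+B\hat L$, is just a rearrangement of the third bullet, $AP=PF-B\hat L$.

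Next I would turn to the two genuinely new equations,~(\ref{equ:PiRelationSyl}) and~(\ref{equ:PiRelationInput}), which involve $G=[\,MB\ \ MAD\,]$. For~(\ref{equ:PiRelationInput}), $G\Gamma=MB$, the natural choice is $\Gamma=\begin{bmatrix}I_m\\0\end{bmatrix}\in\mathbb{R}^{\hat m\times m}$, since $G\Gamma$ then picks out the first block $MB$; this fixes $\Gamma$. For~(\ref{equ:PiRelationSyl}), $MA=FM+GN$, I would left-multiply the identity $PM+DE=I_n$ by $MA$ to get $MA=MAPM+MADE$; then use $AP=PF-B\hat L$ and $MP=I_{\hat n}$ to rewrite $MAP=M(PF-B\hat L)=F-MB\hat L$, so that $MAPM=FM-MB\hat LM$. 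Collecting terms, $MA=FM-MB\hat LM+MADE=FM+[\,MB\ \ MAD\,]\begin{bmatrix}-\hat LM\\ E\end{bmatrix}=FM+GN$ with $N=\begin{bmatrix}-\hat LM\\ E\end{bmatrix}\in\mathbb{R}^{\hat m\times n}$. This identifies the matrix $N$ and closes~(\ref{equ:PiRelationSyl}).

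Finally, I would remark that Assumption~\ref{asmp:stabilisability} is what Theorem~\ref{thm:simuFunction} needs to supply the matrices $W$ and $K$ and hence the simulation function $V$ and interface~(\ref{equ:interface}); with~(\ref{equ:SimuCondition}) verified, Theorem~\ref{thm:simuFunction} applies directly, and with~(\ref{equ:PiRelationV2}) verified, Theorem~\ref{thm:PiRelationSyl} gives the $M$-relation, completing the proof. The main obstacle I anticipate is purely bookkeeping: keeping the block structure of $G$, $N$, and $\Gamma$ consistent across dimensions $\hat m=m+(\hat m-m)$, and making sure the well-posedness of the geometric construction (injectivity of $P$, the direct-sum decomposition $\mathrm{im}(P)\oplus\mathrm{im}(D)=\mathbb{R}^n$, existence of $E$ with $PM+DE=I_n$) is invoked exactly where each identity is used — in particular that $MP=I_{\hat n}$ and $PM+DE=I_n$ are the two relations doing all the work, while $\mathrm{im}(D)\subseteq\ker(C)$ is what rescues the output equation. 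There is no hard analytic estimate here; the content is the linear-algebraic identification of $N$ and $\Gamma$.
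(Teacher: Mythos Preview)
Your proposal is correct and matches the paper's treatment: the paper states this theorem with a citation and no self-contained proof, but immediately afterward records that the construction satisfies~(\ref{equ:PiRelationV2}) with exactly the $N=\begin{bmatrix}-\hat L M\\ E\end{bmatrix}$ and $\Gamma=\begin{bmatrix}I_m\\ 0\end{bmatrix}$ that you derive, which is the substantive step. Your verification of~(\ref{equ:SimuCondition}) is the obvious rearrangement, and your route to~(\ref{equ:PiRelation}) via~(\ref{equ:PiRelationV2}) and Theorem~\ref{thm:PiRelationSyl} is precisely the mechanism the paper sets up.
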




By straightforward computations, this geometric design method proposed in Theorem~\ref{thm:abstractDesign} also satisfies~(\ref{equ:PiRelationV2}) with
\begin{equation}\label{equ:PiRelationNGamma}
    N = \left[\begin{array}{c}
    -\hat{L}M \\ E 
    \end{array}\right], \qquad 
    \Gamma = \left[\begin{array}{c}
    I_{m} \\ 0_{(\hat{m} - m) \times m}
    \end{array}\right].
\end{equation}
Note that this inherently implies that the dimension $\hat{m}$ of the input $v$ to the simpler system~(\ref{equ:ReducedSystem}) is larger than the dimension $m$ of the input $u$ to the large-scale system~(\ref{equ:ConcreteSystem}), \textit{i.e.}, $\hat{m} > m$. This is natural in the ASHC technique but not expected in MOR techniques.

Through Theorem~\ref{thm:PiRelationSyl}, the $M$-relation requirement can also be interpreted from the perspective of moment matching. In particular, this interpration is based on the interconnection structure formed by system~(\ref{equ:ReducedSystem}) and system~(\ref{equ:ConcreteSystem})
via the link~(\ref{equ:PiRelationLink}),
which is depicted in Fig.~\ref{fig:Interconnect_PiRelation}. Then we have the following result.

\begin{thm}\label{thm:PiRelationMatching}
    Consider the interconnection shown in Fig.~\ref{fig:Interconnect_PiRelation}. Suppose $u \equiv 0$ and $A$ has simple eigenvalues satisfying $\sigma(A) \subset \mathbb{C}_0$. Suppose $F$ is Hurwitz. Let $N \in \mathbb{R}^{\hat{m} \times n}$  be such that $(A,\, N)$ is observable. Let $M \in \mathbb{R}^{\hat{n} \times n}$ be such that~(\ref{equ:PiRelationSyl}) holds and let the pair $(A,\, x(0))$ be excitable. Then, there exists a one-to-one relation between the moment $HM$ of system~(\ref{equ:ReducedSystem}) at $(A,\, N)$ and the steady-state output $\psi$. 
\end{thm}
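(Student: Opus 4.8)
The plan is to recognise that, once $u\equiv 0$ is imposed, the interconnection of Fig.~\ref{fig:Interconnect_PiRelation} becomes a direct interconnection of exactly the type in Fig.~\ref{fig:Interconnection_MM}(a), but with the \emph{roles reversed} compared to Theorem~\ref{thm:hierarchicalMatching}: here the abstract system~(\ref{equ:ReducedSystem}) plays the part of ``the system'' whose moments are computed, while the concrete system~(\ref{equ:ConcreteSystem}) with $u\equiv 0$ plays the part of the signal generator. Indeed, setting $u\equiv 0$ in the link~(\ref{equ:PiRelationLink}) gives $v = Nx$, and system~(\ref{equ:ConcreteSystem}) reduces to the autonomous generator $\dot x = A x$, $\theta = Nx$, which drives system~(\ref{equ:ReducedSystem}) via $v = \theta$. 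I would also record here the elementary but essential fact that $\sigma(A)\cap\sigma(F)=\emptyset$: this is immediate from $\sigma(A)\subset\mathbb{C}_0$ and $F$ Hurwitz.

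The first substantive step is to identify $HM$ as a moment. Applying Definition~\ref{def:moment} to system~(\ref{equ:ReducedSystem}) (so that $F,G,H$ take the roles of $A,B,C$) with $S = A$ and $L = N$: since $\sigma(A)\cap\sigma(F)=\emptyset$ and $(A,N)$ is observable, the moment of~(\ref{equ:ReducedSystem}) at $(A,N)$ is $H\Pi'$, where $\Pi'\in\mathbb{R}^{\hat n\times n}$ is the unique solution of $\Pi' A = F\Pi' + GN$. But this is precisely equation~(\ref{equ:PiRelationSyl}), so by uniqueness $\Pi' = M$, and hence the moment at $(A,N)$ equals $HM$. This is the conceptual heart of the statement: the $M$-relation Sylvester equation~(\ref{equ:PiRelationSyl}) \emph{is} the moment equation of the abstract system relative to the concrete-system generator.

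The final step is to invoke Theorem~\ref{thm:SteadyState}, again with the substitution of system~(\ref{equ:ReducedSystem}) for system~(\ref{equ:ConcreteSystem}) and $(S,L)=(A,N)$. I would check the hypotheses in turn: $\sigma(F)\subset\mathbb{C}_{<0}$ (since $F$ is Hurwitz, playing the role of the ``$A\subset\mathbb{C}_{<0}$'' hypothesis); $A$ has simple eigenvalues with $\sigma(A)\subset\mathbb{C}_0$ (assumed); $(A,N)$ observable (assumed); and $(A,x(0))$ excitable (assumed). The conclusion of the first bullet of Theorem~\ref{thm:SteadyState} then gives a one-to-one relation between the moment at $(A,N)$, namely $HM$, and the steady state of the output $\psi$ of the direct interconnection, i.e.\ of Fig.~\ref{fig:Interconnect_PiRelation} with $u\equiv 0$. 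I do not expect a serious obstacle here; unlike the proof of Theorem~\ref{thm:hierarchicalMatching}, no auxiliary observability-propagation argument is needed, because $N$ is assumed directly so that $(A,N)$ is observable. The only delicate point is bookkeeping: keeping straight which matrices play the generator role versus the system role under the reversal, and confirming that the spectral-disjointness and excitability conditions of Definition~\ref{def:moment} and Theorem~\ref{thm:SteadyState} transfer verbatim in this reversed configuration.
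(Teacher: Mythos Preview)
Your proposal is correct and follows exactly the approach of the paper, which simply states that the result follows directly from Theorem~\ref{thm:SteadyState}. You have merely spelled out the role-reversal and the hypothesis verification that the paper leaves implicit.
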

\begin{proof}
    The result follows directly from Theorem~\ref{thm:SteadyState}.
\end{proof}


As the $M$-relation requirement induces the invariant set $\mathcal{M}_s = \{(x, \xi) \,|\, \xi = M x \}$, we can study, similarly to~(\ref{equ:directStateError}), the dynamics of the error $\varepsilon_s = \xi - M x$. The satisfaction of~(\ref{equ:PiRelationV2}) yields that
$\dot{\varepsilon}_s = F\xi + Gv - M Ax - M Bu = F\xi + GN x + G\Gamma u - (FM + GN)x - M Bu = F\varepsilon_s,$
and the error in the output of two systems satisfies 
$
    \varepsilon_y = \psi - y = H\xi - Cx = H\varepsilon_s.
$
This result demonstrates that, starting with any initial condition $(x(0), \xi(0)) \in \mathcal{M}_s$, the $M$-relation yields the exact matching of the two output trajectories $y$ and $\psi$ in the direct interconnection shown in Fig.~\ref{fig:Interconnect_PiRelation} for any input $u$. Conversely, if $(x(0), \xi(0)) \not\in \mathcal{M}_s$, these output trajectories will match at the steady state only if $F$ is Hurwitz, \textit{i.e.}, only if the invariant set $\mathcal{M}_s$ is also attractive, as assumed by Theorem~\ref{thm:PiRelationMatching}.

\begin{figure}[tbp]
\begin{centering}
    \includegraphics[width=0.85\linewidth]{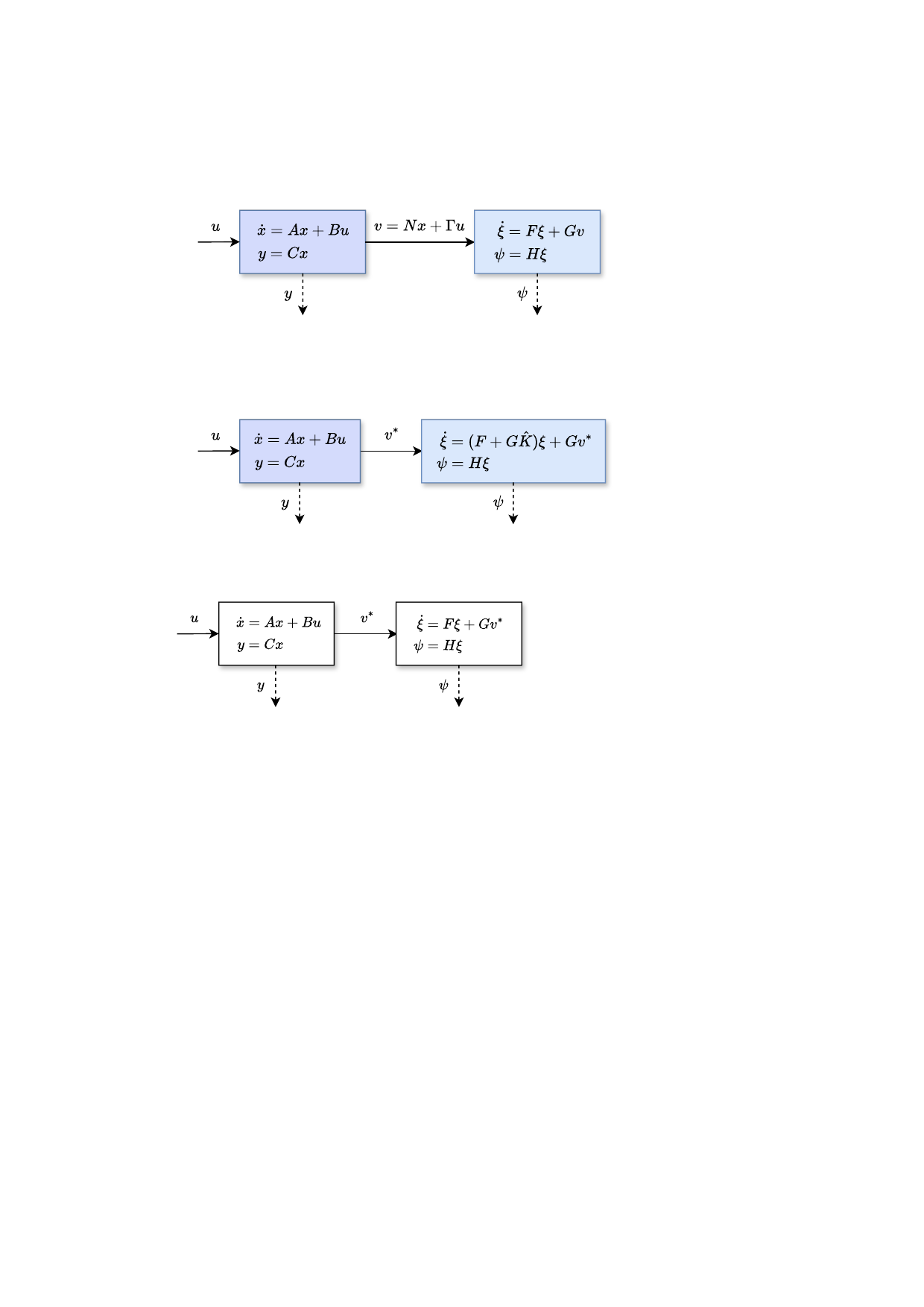}
    \caption{Interconnection for the interpretation of the linear matrix equations~(\ref{equ:PiRelationV2}) with a stabilising link $v^{*} = \Gamma u + (N - \hat{K}M)x$.}
    \label{fig:Interconnect_PiRelation_Stable}
\end{centering}
\end{figure}

If $F$ is not Hurwitz (as $\sigma(F)$ may be used to define the desired moments in Theorem~\ref{thm:hierarchicalMatching}), the input $v$ in~(\ref{equ:PiRelationLink}) can also be modified to render $\mathcal{M}_s$ both invariant and attractive for the same $M$. Assume the pair $(F,\, G)$ is stabilisable, \textit{i.e.}, there exists a matrix $\hat{K} \in \mathbb{R}^{\hat{m} \times \hat{n}}$ such that $\sigma(F + G\hat{K}) \subset \mathbb{C}_{< 0}$. Then the input 
\begin{equation}\label{equ:PiRelationLinkStable}
    v = Nx + \Gamma u + \hat{K}(\xi - M x)
\end{equation}
leads to $\dot{\varepsilon}_s = (F + G\hat{K}) \varepsilon_s$. Therefore, the invariant set $\mathcal{M}_s$ is exponentially attractive. This result shows that the $M$-relation alternatively means that for any state $x$ with any inputs $u$ into system~(\ref{equ:ConcreteSystem}), there exists an input $v$ to system~(\ref{equ:ReducedSystem}) such that the steady-state outputs of two systems, $\psi$ and $y$, are the same for all initial conditions $x(0)$ and $\xi(0)$. Consequently, a similar matching result as Theorem~\ref{thm:hierarchicalMatching} exists, which relaxes the stability assumption on system~(\ref{equ:ReducedSystem}) of Theorem~\ref{thm:PiRelationMatching}.

\begin{corol}\label{corol:PiRelationMatching}
    Consider the interconnection structure shown in Fig.~\ref{fig:Interconnect_PiRelation_Stable}. Suppose $u \equiv 0$, the pair $(F,\, G)$ is stabilisable, and $A$ has simple eigenvalues satisfying $\sigma(A) \subset \mathbb{C}_0$ and $\sigma(A) \cap \sigma(F) = \emptyset$. Let $N\in\mathbb{R}^{\hat{m}\times n}$ and $\hat{K}\in\mathbb{R}^{\hat{m} \times \hat{n}}$ in~(\ref{equ:PiRelationLinkStable}) be such that $(A,\, N)$ is observable and $\sigma(F + G\hat{K}) \subset \mathbb{C}_{< 0}$. Let $M \in \mathbb{R}^{\hat{n} \times n}$ be such that~(\ref{equ:PiRelationSyl}) holds and let the pair $(A,\, x(0))$ be excitable. Then, there exists a one-to-one relation between the moment $HM$ of system~(\ref{equ:ReducedSystem}) at $(A,\, N)$ and the steady-state output $\psi$ of system~(\ref{equ:ReducedSystem}) in the interconnection in Fig.~\ref{fig:Interconnect_PiRelation_Stable}.
\end{corol}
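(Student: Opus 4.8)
The plan is to mirror the proof of Theorem~\ref{thm:hierarchicalMatching}, using the fact that, under the stabilising link~(\ref{equ:PiRelationLinkStable}) with $u \equiv 0$, system~(\ref{equ:ReducedSystem}) becomes a Hurwitz system driven by system~(\ref{equ:ConcreteSystem}) viewed as a signal generator. First I would set $u \equiv 0$ in~(\ref{equ:PiRelationLinkStable}), obtaining $v = Nx + \hat{K}(\xi - Mx)$, so that the interconnection in Fig.~\ref{fig:Interconnect_PiRelation_Stable} is precisely the direct interconnection of the signal generator $\dot x = A x$, $\theta = (N - \hat{K}M)x$ with the system $\dot\xi = (F + G\hat{K})\xi + G\theta$, $\psi = H\xi$. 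I would then verify the hypotheses of Theorem~\ref{thm:SteadyState} for this pair with $S = A$ and $L = N - \hat{K}M$: since $\sigma(A) \subset \mathbb{C}_0$ and $\sigma(F + G\hat{K}) \subset \mathbb{C}_{<0}$, the spectra are disjoint and $F + G\hat{K}$ is Hurwitz; $A$ has simple eigenvalues on the imaginary axis and $(A, x(0))$ is excitable by assumption. The only condition requiring a genuine argument is observability of $(A, N - \hat{K}M)$, which I would obtain by the same PBH argument as in Theorem~\ref{thm:hierarchicalMatching}: if $A\varepsilon = \lambda_A\varepsilon$ and $(N - \hat{K}M)\varepsilon = 0$ for some non-zero $\varepsilon$, then right-multiplying~(\ref{equ:PiRelationSyl}) by $\varepsilon$ and using $N\varepsilon = \hat{K}M\varepsilon$ gives $(\lambda_A I_{\hat{n}} - F - G\hat{K})M\varepsilon = 0$, hence $M\varepsilon = 0$ because $\lambda_A \notin \sigma(F + G\hat{K})$, so $N\varepsilon = 0$, contradicting observability of $(A, N)$.

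Next I would identify the moment. By Definition~\ref{def:moment}, the moment of the interconnected Hurwitz system at $(A, N - \hat{K}M)$ is $H\overline{M}$, where $\overline{M}$ is the unique solution of $\overline{M} A = (F + G\hat{K})\overline{M} + G(N - \hat{K}M)$. Checking $\overline{M} = M$: the right-hand side equals $FM + G\hat{K}M + GN - G\hat{K}M = FM + GN = MA$ by~(\ref{equ:PiRelationSyl}), so $M$ solves the equation and, by uniqueness (spectra of $A$ and $F + G\hat{K}$ disjoint), $\overline{M} = M$; thus the moment is $HM$. Since the same $M$ solves $MA = FM + GN$, $HM$ is also the moment of system~(\ref{equ:ReducedSystem}) at $(A, N)$ per Definition~\ref{def:moment}, using $\sigma(A) \cap \sigma(F) = \emptyset$ and observability of $(A, N)$. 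Applying Theorem~\ref{thm:SteadyState} to the direct interconnection then yields the claimed one-to-one relation between $HM$ and the steady-state output $\psi$, completing the proof.

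I expect the main obstacle to be essentially bookkeeping rather than a deep difficulty: one must be careful that the generator output matrix in the stabilised interconnection is $N - \hat{K}M$, not $N$, and show that the associated moment nonetheless coincides with $HM$, which is the quantity in which the statement is phrased. The observability verification of $(A, N - \hat{K}M)$ is the only step needing an actual argument, and it is a direct transcription of the PBH reasoning in the proof of Theorem~\ref{thm:hierarchicalMatching}, with $F + G\hat{K}$ here playing the role of $A + BK$ there.
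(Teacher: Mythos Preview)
Your proposal is correct and follows precisely the approach the paper intends: the paper's own proof simply states that it is similar to that of Theorem~\ref{thm:hierarchicalMatching} and omits the details. Your transcription of that argument to the present setting---recasting Fig.~\ref{fig:Interconnect_PiRelation_Stable} with $u\equiv 0$ as a direct interconnection with generator matrix $A$ and output $N-\hat{K}M$, verifying observability of $(A,\,N-\hat{K}M)$ via the PBH test and~(\ref{equ:PiRelationSyl}), and showing $\overline{M}=M$ so that the moment is $HM$---is exactly the intended proof.
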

\begin{proof}
    The proof is similar to that of Theorem~\ref{thm:hierarchicalMatching} and, thus, is omitted.
\end{proof}

\begin{figure}[tbp]
\begin{centering}
    \includegraphics[width=0.8\linewidth]{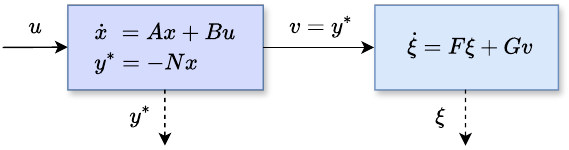}
    \caption{Swapped interconnection for the interpretation of the linear matrix equations~(\ref{equ:PiRelationV2}).}
    \label{fig:Interconnection_Swapped}
\end{centering}
\end{figure}

\begin{remark}
    By the previous derivations of the error dynamics $\varepsilon_s$ and $\varepsilon_y$, the assumption that the eigenvalues of $A$ are simple and satisfy $\sigma(A) \subset \mathbb{C}_0$ in Theorem~\ref{thm:PiRelationMatching} and Corollary~\ref{corol:PiRelationMatching} is only used for guaranteeing the existence of non-trivial steady-state responses of the outputs $y$ and $\psi$. This assumption is not necessary for $\varepsilon_s$ and $\varepsilon_y$ to converge to zero.
\end{remark}


Now we provide an alternative interpretation. Rather than interpreting the $M$-relation requirement~(\ref{equ:PiRelationV2}) using the direct interconnections in Figs.~\ref{fig:Interconnect_PiRelation} and~\ref{fig:Interconnect_PiRelation_Stable}, it is possible to interpret this requirement as a type of swapped interconnection introduced in Section~\ref{sec:PrelimiariesMM} as shown in the next result.
\begin{thm}\label{thm:PiRelationMatchingSwapped}
    Consider the interconnection shown in Fig.~\ref{fig:Interconnection_Swapped}. Suppose $\sigma(A) \subset \mathbb{C}_{<0}$. Let $F$ have simple eigenvalues satisfying $\sigma(F) \subset \mathbb{C}_0$ and $\sigma(F) \cap \sigma(A) = \emptyset$. Assume the pair $(F,\, G)$ is reachable. Let $M \in \mathbb{R}^{\hat{n} \times n}$ be such that~(\ref{equ:PiRelationSyl}) holds. Then, there exists a one-to-one relation between the moment $MB$ of system
    \begin{equation}\label{equ:ConcreteSysVout}
        \dot{x} = A x+B u, \quad y^{*} = -N x,
    \end{equation}
    at $(F,\, G)$, with $y^{*}(t) \in \mathbb{R}^{\hat{m}}$, and the steady-state response of output $\xi$ with $x(0) = \xi(0) = 0$ for any non-zero signal $u$ that exponentially decays to zero. 
\end{thm}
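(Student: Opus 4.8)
The plan is to recognise the interconnection of Fig.~\ref{fig:Interconnection_Swapped} as an instance of the \emph{swapped} interconnection appearing in the second statement of Theorem~\ref{thm:SteadyState}, with system~(\ref{equ:ConcreteSysVout}) playing the role of the concrete system~(\ref{equ:ConcreteSystem}) and the abstract system~(\ref{equ:ReducedSystem}) playing the role of the filter~(\ref{eq-filQ}). Concretely, I would set $Q = F$ and $R = G$, and take the output of the concrete system to be $y^{*} = -Nx \in \mathbb{R}^{\hat{m}}$, so that the filtering link $\eta = y$ of Theorem~\ref{thm:SteadyState} becomes the link $v = y^{*} = -Nx$ shown in Fig.~\ref{fig:Interconnection_Swapped} (the feedthrough term $\Gamma u$ of~(\ref{equ:PiRelationLink}) is dropped and the sign of $N$ is flipped). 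With this identification, the swapped Sylvester equation~(\ref{equ:MomentSwappedSylverter}) defining the moment reads $F\Upsilon = \Upsilon A - GN$.

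The key step is then purely algebraic: the $M$-relation Sylvester equation~(\ref{equ:PiRelationSyl}), $MA = FM + GN$, rearranges to $FM = MA - GN$, which is exactly the equation above. Since $\sigma(F) \cap \sigma(A) = \emptyset$, this Sylvester equation has a unique solution, hence $\Upsilon = M$, and by Definition~\ref{def:moment} the moment of system~(\ref{equ:ConcreteSysVout}) at $(F, G)$ equals $\Upsilon B = MB$ (reachability of $(F, G)$ makes $(F, G)$ an admissible pair for~(\ref{equ:ConcreteSysVout})). It then remains only to check the hypotheses of the second bullet of Theorem~\ref{thm:SteadyState}: $\sigma(A) \subset \mathbb{C}_{<0}$ is assumed; $F$ has simple eigenvalues with $\sigma(F) \subset \mathbb{C}_0$ is assumed; and $(F, G)$ reachable plays the role of $(Q, R)$ reachable, while the $S$-related hypotheses of Theorem~\ref{thm:SteadyState} are irrelevant to this half of the statement. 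Invoking Theorem~\ref{thm:SteadyState} with $x(0) = \xi(0) = 0$ and any non-zero $u$ decaying exponentially to zero yields the claimed one-to-one relation between $MB$ and the steady-state response of $\xi$, mirroring the proof of Theorem~\ref{thm:PiRelationMatching}.

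The one delicate point — and the step I would flag — is precisely this sign and feedthrough bookkeeping in choosing the output matrix: one must pick $y^{*} = -Nx$ (rather than $Nx$ as in the link~(\ref{equ:PiRelationLink})) so that $RC = -GN$ aligns with the rearranged form of~(\ref{equ:PiRelationSyl}), and this is also what makes the picture a genuine swapped interconnection in the sense of Fig.~\ref{fig:Interconnection_MM}(b). Once this identification is made correctly there is essentially nothing left to compute; unlike Theorem~\ref{thm:hierarchicalMatching}, no auxiliary observability/reachability argument is needed here, since $(F, G)$ reachable is assumed outright rather than deduced.
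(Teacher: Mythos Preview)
Your proposal is correct and is exactly the paper's approach: the paper's proof consists of the single sentence ``The result follows directly from Theorem~\ref{thm:SteadyState}'', and your argument simply spells out the identification $Q=F$, $R=G$, $C \leftarrow -N$, $\Upsilon=M$ that makes this invocation legitimate. Your remarks about the sign bookkeeping and the absence of any auxiliary reachability argument are accurate and add useful detail beyond what the paper records.
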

\begin{proof}
    The result follows directly from Theorem~\ref{thm:SteadyState}.
\end{proof}

Note that the matrix $MB$ is not the moment of system~(\ref{equ:ConcreteSystem}), but that of system~(\ref{equ:ConcreteSysVout}), which flows as~(\ref{equ:ConcreteSystem}), but has a different output $y^{*} = -Nx$, where $-N$ is not necessarily equal to $C$. Moreover, Theorem~\ref{thm:abstractDesign} implies that a simpler system that satisfies the $M$-relation~(\ref{equ:PiRelationV2}) can be designed with $G=\left[MB \ \, MAD \right]$ and with $N$ and $\Gamma$ in~(\ref{equ:PiRelationNGamma}). Thus, system~(\ref{equ:ReducedSystem}) with $H = 0$ is the same as system~(\ref{eq-inv2}), for $Q=F$, $R=G$, $C=-N$ (and so $\Upsilon = M$) and $v=\Gamma u$. The resulting abstract system takes the form
$$
\dot \xi = F \xi + MB u.
$$
Thus, in a dual fashion with respect to the discussion after Theorem~\ref{thm:hierarchicalMatching}, the abstract system with input $v = \Gamma u$ behaves as a 
``\textit{limiting}'' reduced-order model. 
It should be stressed that the abstract system is not of the form (\ref{equ:ReducedMMQR}), because $H=0$, and consequently it does not match the moments $MB$ at $(F,G)$. But, as explained in Section~\ref{sec:PrelimiariesMM}, the limiting system~(\ref{eq-inv2}) is the system towards which the reduced-order model~(\ref{equ:ReducedMMQR}) tends to when $Q-RH$ is Hurwitz. Thus, even though the abstract system is not a reduced-order model by moment matching, it still has the property of matching, for any non-zero signal $u$ that exponentially decays to zero, the steady-state output of the swapped interconnection in Fig.~\ref{fig:Interconnection_Swapped}.



\subsection{Brief Summary of the Two Key Requirements}
We briefly summarise the interpretations presented above. The ASHC technique consists of designing a simpler system and an interface such that: i) the output responses of the two systems have bounded discrepancy; ii) the simpler system is $M$-related to the large-scale system. Regarding requirement i), we have the following interpretations.
\begin{itemize}
    \item When the input $v \equiv 0$, we provide conditions under which the steady-state output of system~(\ref{equ:ConcreteSystem}) in the direct interconnection shown in Fig.~\ref{fig:Interconnect_Output} is matched with the steady-state output of system~(\ref{equ:ReducedSystem}). In other words, the steady-state output response of system~(\ref{equ:ConcreteSystem}) admits a one-to-one relation with the moments of system~(\ref{equ:ConcreteSystem}) at $(F,\, \hat{L})$, see Theorem~\ref{thm:hierarchicalMatching}.
    \item When the input $v \not\equiv 0$ but $v \in L^\infty$, the mismatch between the steady-state outputs of the two systems, interconnected via the interface, is bounded and can be minimized.
\end{itemize}
Regarding requirement ii), we have the following interpretations.
\begin{itemize}
    \item For any bounded input $u$, we provide conditions under which the steady-state output of system~(\ref{equ:ReducedSystem}) in the direct interconnections shown in Fig.~\ref{fig:Interconnect_PiRelation} and Fig.~\ref{fig:Interconnect_PiRelation_Stable} is matched with the steady-state response of system~(\ref{equ:ConcreteSystem}). In other words,  the steady-state output response of system~(\ref{equ:ReducedSystem}) admits a one-to-one relation with the moments of system~(\ref{equ:ReducedSystem}) at $(A,\, N)$, see Theorem~\ref{thm:PiRelationMatching} and Corollary~\ref{corol:PiRelationMatching}.
    \item When the input $u$ is any non-zero signal that exponentially decays to zero, we provide conditions under which the steady-state output of the swapped interconnection shown in Fig.~\ref{fig:Interconnection_Swapped} admits a one-to-one relation with the moments of system~(\ref{equ:ConcreteSysVout}) at $(F,\,G)$, see Theorem~\ref{thm:PiRelationMatchingSwapped}.
\end{itemize}

In summary, the final obtained abstract system, for $v=\Gamma u$, namely
\begin{equation}
\label{equ:final-abs}
    \dot{\xi} = F \xi + MB u, \qquad
    \psi = C P \xi,
\end{equation}
is reminiscent of the two-sided reduced-order model by moment matching, in the ``\textit{limiting}'' sense described in the previous sections.


\subsection{Discussion}
The importance of this paper lies in bridging the conceptual gap between moment matching and the ASHC technique. The moment matching method has an advantage of admitting natural extensions beyond LTI systems, such as nonlinear systems~\cite{ref:shakib2024optimal}, time-delay systems~\cite{ref:scarciotti2015model}, and stochastic systems~\cite{ref:scarciotti2021moment}. Meanwhile, other techniques, such as data-driven methods, have also been proposed for solving moment matching problems~\cite{ref:scarciotti2017data,ref:mao2024data}.
Those directions have been partly, but not extensively, explored in the ASHC context. Bridging the gap between these two techniques suggests that the existing results in the moment matching field can potentially lead to extensions of the ASHC method to more generic settings. Conversely, the current developments in the ASHC technique, \textit{e.g.}, compositional method for constructing abstractions for networked systems~\cite{ref:rungger2016compositional,ref:smith2020approximate} and symbolic control method for implementing approximate simulations~\cite{ref:tabuada2008approximate}, do not have a counterpart in the moment matching framework.

To give a concrete hint of this potential, we consider the development of a nonlinear ASHC framework. Since we have characterised the objects that define the final abstraction \eqref{equ:final-abs} by means of the Sylvester equations (\ref{equ:SimuCondition}) and (\ref{equ:PiRelationV2}), by leveraging the nonlinear moment matching framework, a nonlinear abstraction can be characterised by means of the invariance equations \cite[(28) and (35)]{ref:scarciotti2024interconnection} \textit{mutatis mutandis}. A work in this direction is already in preparation.



\begin{figure}[tbp]
\begin{centering}
    \includegraphics[width=0.9\linewidth]{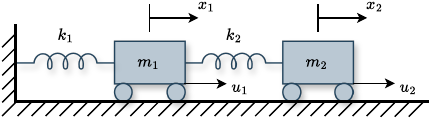}
    \caption{Two-spring two-mass mechanical system.}
    \label{fig:example}
\end{centering}
\end{figure}

\section{Numerical Example}\label{sec:example}
We illustrate the theory presented in the paper by means of a practical example. We study the ASHC problem of system~(\ref{equ:ConcreteSystem}) with matrices
\begin{equation*}
\begin{aligned}
    A &=
\left[
\begin{array}{cccc}
    0  &  0  &  1  &  0 \\
    0  &  0  &  0  &  1 \\
   \!\!\!\!-\frac{k_1+k_2}{m_1}  & \frac{k_2}{m_1}  &  0  &  0 \\
    \frac{k_2}{m_2}  & \!\!\!\!-\frac{k_2}{m_2}  & 0  &  0
\end{array}
\right], \quad
B =
\left[
\begin{array}{cc}
    0  &  0 \\
    0  &  0 \\
    \frac{1}{m_1}  &  0 \\
    0  &  \frac{1}{m_2} \\ 
\end{array}
\right],\!\!
\end{aligned} 
\end{equation*}
and $C = [I_2, 0_{2 \times 2}]$, where the system dimensions $n = 4$, $m = p = 2$. This concrete system models the two-spring two-mass mechanical system in Fig.~\ref{fig:example}, which consists of two masses $m_1$ and $m_2$ connected by two springs with spring coefficients $k_1$ and $k_2$. Variables $x_1$ and $x_2$ denote the displacements from the equilibrium positions of the masses $m_1$ and $m_2$ under the control forces $u_1$ and $u_2$, respectively. Then the input, output, and state variables of system~(\ref{equ:ConcreteSystem}) are denoted as $u = [u_1, u_2]^\top$, $y = [x_1, x_2]^\top$, and $x = [x_1, x_2, x_3, x_4]^\top$ with $x_3 = \dot{x}_1$ and $x_4 = \dot{x}_2$. In the following simulation, we select the parameters as $k_1 = 100$ N/m, $k_2 = 50$ N/m, $m_1 = 20$ kg, and $m_2 = 10$ kg. Then by following the design method proposed by~Theorem~\ref{thm:abstractDesign}, a valid simpler model~(\ref{equ:ReducedSystem}) solving the ASHC problem is given by the selection
\begin{equation*}
\begin{aligned}
    F &=
\left[\!\!
\begin{array}{cc}
     0  & 10 \\
    -10  & 0 
\end{array}
\!\!\right]\!, \quad
G =
\left[\!\!
\begin{array}{rrrr}
  0  & 0  &  1  &  0 \\
  0  & 0  &  0  &  1
\end{array}
\!\!\right]\!, \quad
H =
\left[\!\!
\begin{array}{cc}
  1  &  0 \\
  0  &  1
\end{array}
\!\!\right]\!,
\end{aligned} 
\end{equation*}
with dimensions $\hat{n} = 2$ and $\hat{m} = 4$. In addition, the matrices 
\begin{equation*}
\begin{aligned}
    P &=
\left[
\begin{array}{cccc}
  1  &  0  &  0  &  -10 \\
  0  &  1  &  10  &  0
\end{array}
\right]^{\top}, \qquad
\hat{L}=
\left[
\begin{array}{cc}
  -1850  &  -50 \\
  -50  &  -950
\end{array}
\right],
\end{aligned} 
\end{equation*}
$M = [I_2, 0_{2 \times 2}]$, and $D = [0_{2 \times 2}, I_2]^\top$ solve~(\ref{equ:SimuCondition}) and satisfy the design requirements in Theorem~\ref{thm:abstractDesign}. The matrix $\hat{R}$ is selected with all entries equal to one. Furthermore, the matrices $N$ and $\Gamma$ are given by
\begin{equation*}
    N =
    \left[
    \begin{array}{cccc}
     -1850  &  50 &  0  &  0 \\
      50  &  950   &  0  &  0 \\
       0   & -10  &  1  &  0 \\
      10   &  0   &  0  &  1
    \end{array}
    \right], \quad \Gamma =
    \left[
    \begin{array}{cc}
      1  &  0 \\
      0  &  1 \\
      0  &  0 \\
      0  &  0
    \end{array}
    \right],
\end{equation*}
solving~(\ref{equ:PiRelationV2}) with $M$ as above. Note that the matrices $A$ and $F$ both have simple eigenvalues located on the imaginary axis, namely $\sigma(A) = \{\pm 3.1623i, \pm 1.5811i \}$ and $\sigma(F) = \{\pm 10i\}$. In addition, it can be verified that the pairs $(A,\,B)$ and $(F,\, G)$ are stabilisable. In what follows, we illustrate our interpretation of the ASHC technique in Section~\ref{sec:compare} by implementing the two interconnection frameworks shown in Figs.~\ref{fig:Interconnect_Output} and~\ref{fig:Interconnect_PiRelation_Stable}. The initial conditions of the two systems are randomly selected as $
x_0 = [6.7794,\, -1.3348,\, -0.5875,\, 1.2143]^{\top}$ and $\xi_0 = [-4.2811,\, 0.8733]^{\top}$.

\begin{figure}[tbp]
\begin{centering}
    \includegraphics[width=0.93\linewidth]{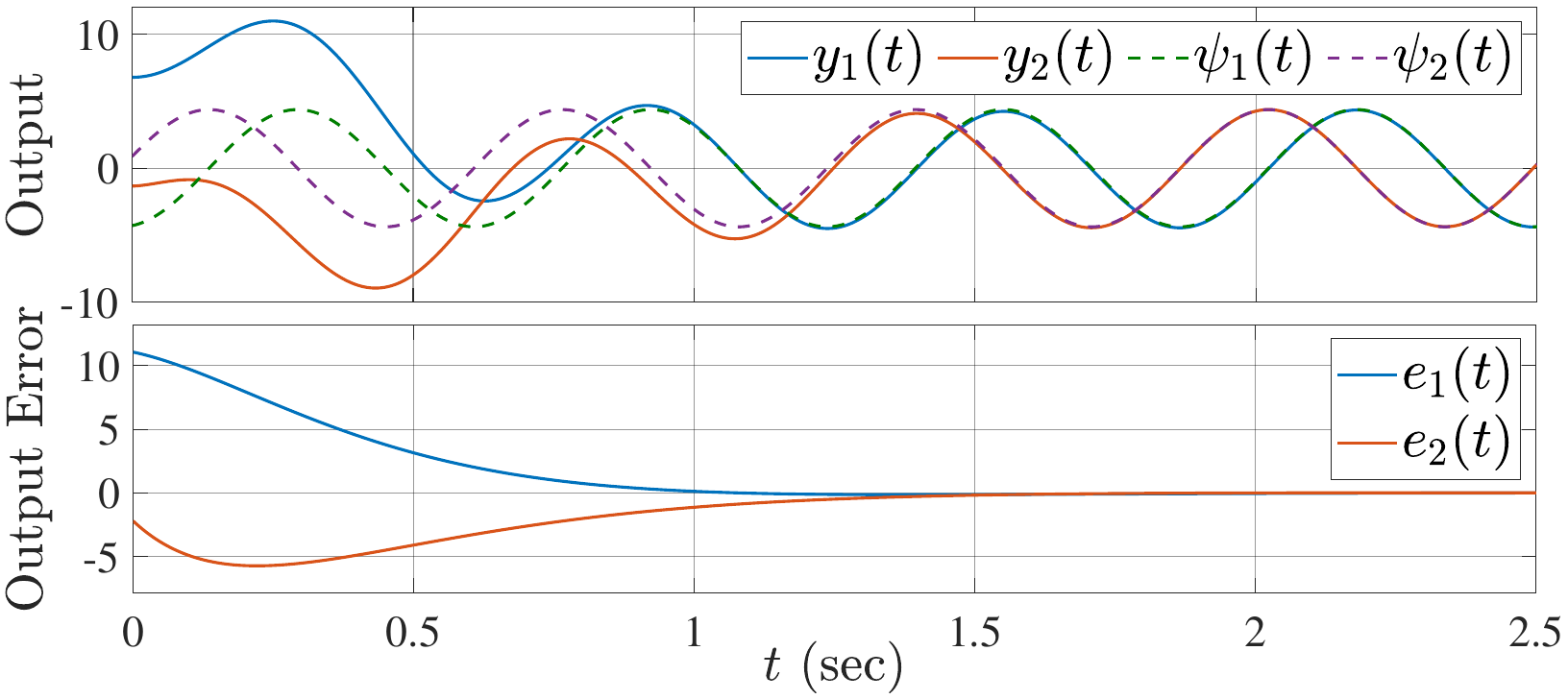}
    \caption{Time histories of outputs $y = [y_1, y_2]^\top$ (solid) and $\psi = [\psi_1, \psi_2]^\top$ (dashed) of systems~(\ref{equ:ConcreteSystem}) and~(\ref{equ:ReducedSystem}) in the interconnection in Fig.~\ref{fig:Interconnect_Output} (top) and their error $e_y = [e_1, e_2]^\top = y - \psi$ (bottom) with $v \equiv 0$.}
    \label{fig:Simulation_OutputZero}
\end{centering}
\end{figure}
\begin{figure}[tbp]
\begin{centering}
    \includegraphics[width=0.93\linewidth]{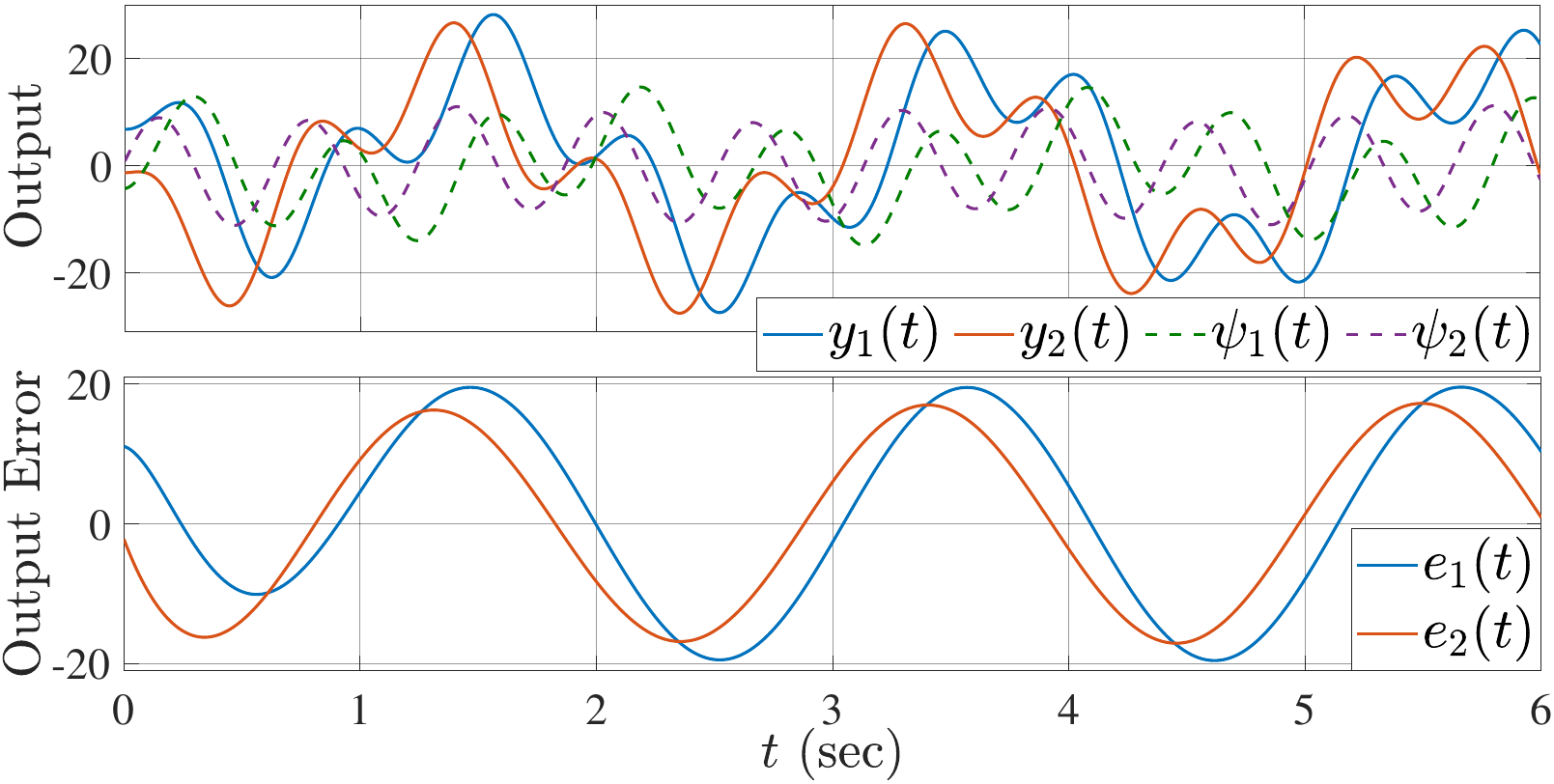}
    \caption{Time histories of outputs $y = [y_1, y_2]^\top$ (solid) and $\psi = [\psi_1, \psi_2]^\top$ (dashed) of systems~(\ref{equ:ConcreteSystem}) and~(\ref{equ:ReducedSystem}) in the interconnection in Fig.~\ref{fig:Interconnect_Output} (top) and their error $e_y = [e_1, e_2]^\top = y - \psi$ (bottom) with $v \not\equiv 0$.}
    \label{fig:Simulation_OutputNonZero}
\end{centering}
\end{figure}

We first implement the interconnection shown in Fig.~\ref{fig:Interconnect_Output}, which is equivalent to the hierarchical structure used in the ASHC technique, and is used for interpreting the bounded output discrepancy requirement. The stabilising gain $K$ is selected such that $\sigma(A + BK) = \{-3 \pm 1.5i,\, -5 \pm 2i,\}$. 
Then we conduct simulations with $v \equiv 0$ and $v \not\equiv 0$ separately, where, when $v \not\equiv 0$, we arbitrarily choose $v = [51.032\sin(4t),\, -25.945\operatorname{sign}(\sin (6t)),\, 0,\, 48.056\cos(3t)]^{\top}$. The results are shown in Figs.~\ref{fig:Simulation_OutputZero} and~\ref{fig:Simulation_OutputNonZero}. Fig.~\ref{fig:Simulation_OutputZero} displays for $v \equiv 0$ the time histories (top) of outputs $y$ and $\psi$ of the interconnection in Fig.~\ref{fig:Interconnect_Output} and (bottom) of their error $e_y = y - \psi$. The plots indicate that when $v \equiv 0$, the steady-state outputs of (the stabilised) system~(\ref{equ:ConcreteSystem}) in the interconnection in Fig.~\ref{fig:Interconnect_Output} and system~(\ref{equ:ReducedSystem}) coincide with each other, suggesting that the moment matching result presented by Theorem~\ref{thm:hierarchicalMatching} holds. The outputs and the corresponding error when $v \not\equiv 0$ are shown in Fig.~\ref{fig:Simulation_OutputNonZero}. In this case, the error between these two output trajectories is bounded, as anticipated by the results in Section~\ref{sec:compareOutputBound}.

\begin{figure}[tbp]
\begin{centering}
    \includegraphics[width=0.93\linewidth]{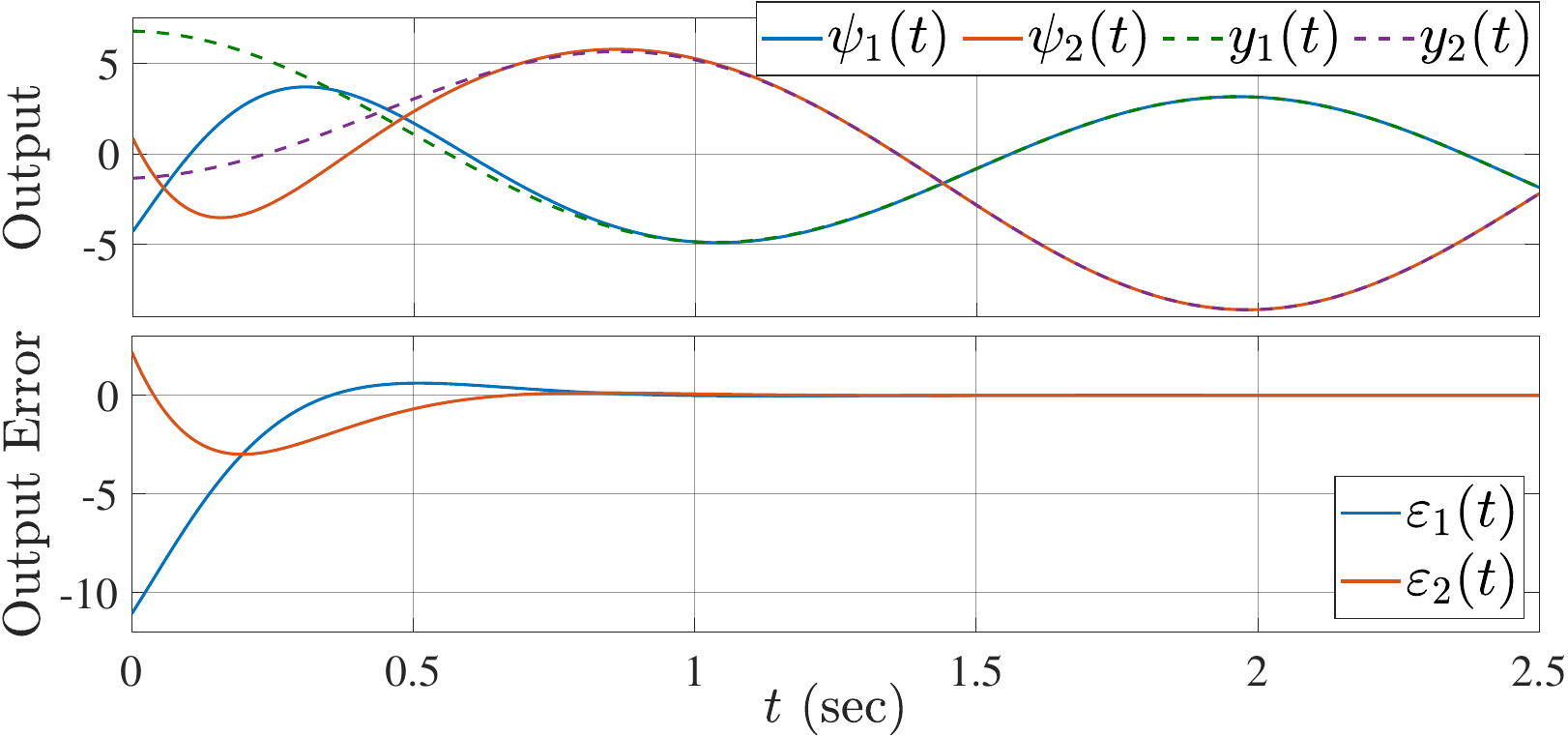}
    \caption{Time histories of outputs $y = [y_1, y_2]^\top$ (dashed) and $\psi = [\psi_1, \psi_2]^\top$ (solid) of systems~(\ref{equ:ConcreteSystem}) and~(\ref{equ:ReducedSystem}) in the interconnection in Fig.~\ref{fig:Interconnect_PiRelation_Stable} (top) and their error $\varepsilon_y = [\varepsilon_1, \varepsilon_2]^\top = \psi - y$ (bottom) with $u \equiv 0$.}
    \label{fig:Simulation_PiZero}
\end{centering}
\end{figure}
\begin{figure}[tbp]
\begin{centering}
    \includegraphics[width=0.93\linewidth]{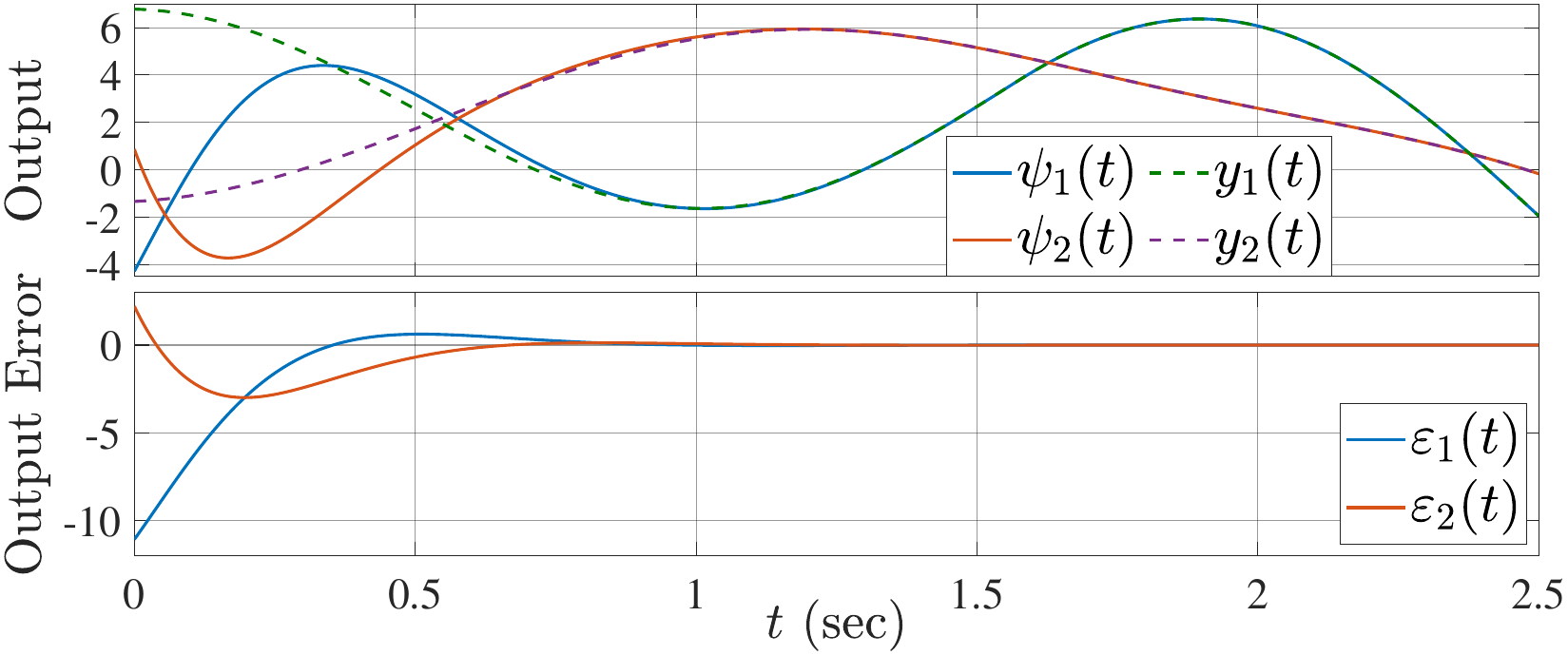}
    \caption{Time histories of outputs $y = [y_1, y_2]^\top$ (dashed) and $\psi = [\psi_1, \psi_2]^\top$ (solid) of systems~(\ref{equ:ConcreteSystem}) and~(\ref{equ:ReducedSystem}) in the interconnection in Fig.~\ref{fig:Interconnect_PiRelation_Stable} (top) and their error $\varepsilon_y =[\varepsilon_1, \varepsilon_2]^\top = \psi - y$ (bottom) with $u \not\equiv 0$.}
    \label{fig:Simulation_PiNonZero}
\end{centering}
\end{figure}

We finally implement the interconnection in Fig.~\ref{fig:Interconnect_PiRelation_Stable} that is used for interpreting the $M$-relation requirement, which states that for any $u$, there exists an input $v$ such that the output trajectories match. Again, we simulate the interconnection with $u \equiv 0$ and $u \not\equiv 0$ (arbitrarily set as $u =  [296.881\operatorname{sign}(\sin(2t)),\, -161.659\cos(3t)]^{\top}$) separately, with the stabilising gain $\hat{K}$ selected such that $\sigma(F + G\hat{K}) = \{-5 \pm 5i\}$. 
The results of the simulations are depicted in Figs.~\ref{fig:Simulation_PiZero} and~\ref{fig:Simulation_PiNonZero}, which show the time histories of (top) outputs $\psi$ and $y$ from the interconnection in Fig~\ref{fig:Interconnect_PiRelation_Stable} as well as (bottom) their difference $\varepsilon_y = \psi - y$ when $u \equiv 0$ (Fig.~\ref{fig:Simulation_PiZero}) and $u \not\equiv 0$ (Fig.~\ref{fig:Simulation_PiNonZero}). The figures imply that when system~(\ref{equ:ReducedSystem}) is $M$-related to system~(\ref{equ:ConcreteSystem}) with requirement~(\ref{equ:PiRelationV2}) satisfied by a stabilisable pair $(F, G)$, then $v$ as in~(\ref{equ:PiRelationLinkStable}) results in the matching of the steady-state outputs of the two (stabilised) systems in the interconnection in Fig.~\ref{fig:Interconnect_PiRelation_Stable}, no matter whether $u \equiv 0$ or not. This result coincides with the discussions in Section~\ref{sec:comparePiRelation}.

\section{Conclusion}\label{sec:concl}
In this paper, we have established a connection between moment matching and the ASHC techniques. 
We have also identified research directions that enable the use of moment-matching-based techniques in ASHC problems and the use of ASHC techniques in moment matching problems. These results are currently under development. 

\addtolength{\textheight}{-12cm}   









\bibliographystyle{IEEEtran}
\bibliography{mybib}

\end{document}